\newenvironment{smatrix}{\left[\begin{smallmatrix}}{\end{smallmatrix}\right]}
\newcommand{\cmark}{$\checkmark$}%
\newcommand{\xmark}{\text{\sffamily X}}%
\newcommand{\CELSE}[1][default]{\end{ALC@if}\ALC@it\algorithmicelse%
\hfill\ALC@com{#1}\begin{ALC@if}}
\newcommand{\CRETURN}[1]{\ALC@it #1: \algorithmicreturn{}\xspace}
\newtheorem{theorem}{Theorem}
\newtheorem{remark}[theorem]{Remark}
\newtheorem{remarks}[theorem]{Remarks}
\newtheorem{definition}[theorem]{Definition}
\newcommand{\Z}{\ensuremath{\mathbb{Z}}}
\newcommand{\N}{\ensuremath{\mathbb{N}}}
\newcommand{\random}{\stackrel{\$}{\leftarrow}}
\newcommand{\checks}[1]{\ensuremath{\mathrel{\stackrel{?}{#1}}}}
\newcommand{\F}{\ensuremath{{\mathbb F}}}
\newcommand{\CC}{\ensuremath{{\mathcal C}}}
\newcommand{\K}{\ensuremath{{\mathcal K}}}
\newcommand{\mtinit}{\textbf{MTInit}\xspace}
\newcommand{\mtread}{\textbf{MTVerifiedRead}\xspace}
\newcommand{\mtwrite}{\textbf{MTVerifiedWrite}\xspace}
\newcommand{\mtblock}{\ensuremath{b}}
\newcommand{\mtnbblock}{\ensuremath{\lceil{}N/\mtblock\rceil{}}}
\newcommand{\E}{\ensuremath{\mathbb{E}}}
\newcommand{\clientinstance}{\ensuremath{C}}
\newcommand{\serva}{\ensuremath{S_1}}
\newcommand{\servb}{\ensuremath{S_4}}
\newcommand{\servc}{\ensuremath{S_{16}}}
\newcommand{\vect}[1]{\ensuremath{\mathbf{#1}}}
\newcommand{\matr}[1]{\ensuremath{\mathbf{#1}}}
\newcommand{\uu}{\vect{u}}
\def\vv{\vect{v}} %
\newcommand{\xx}{\vect{x}}
\newcommand{\yy}{\vect{y}}
\newcommand{\zz}{\vect{z}}
\newcommand{\WW}{\matr{W}}
\newcommand{\svec}{\vect{s}}
\newcommand{\MM}{\matr{M}}
\newcommand{\UU}{\matr{U}}
\newcommand{\VV}{\matr{V}}
\newcommand{\XX}{\matr{X}}
\newcommand{\YY}{\matr{Y}}
\newcommand{\AAA}{\matr{A}}
\newcommand{\BB}{\matr{B}}
\newcommand{\tp}{^\intercal}
\newcommand{\ExternT}{extern=\texttt{T}}
\newcommand{\ExternF}{extern=\texttt{F}}
\newcommand{\compsec}{\ensuremath{\kappa}}
\newcommand{\statsec}{\ensuremath{\lambda}}
\renewcommand{\arraystretch}{1.2}
\newcommand{\email}[1]{\href{mailto:#1}{\nolinkurl{#1}}}
\newcommand{\emails}[2]{\href{mailto:#2}{\nolinkurl{#1}}}
\begin{document}
\title{Dynamic proofs of retrievability with low server storage}
\title{\Large\bf Dynamic proofs of retrievability with low
  server storage}
\author{Gaspard Anthoine\footnote{Universit\'e Grenoble Alpes,
    Laboratoire Jean Kuntzmann, UMR CNRS 5224, Grenoble INP. 700
    avenue centrale, IMAG-CS 40700, 38058 Grenoble, France.
    \email{Gaspard.Anthoine@etu.univ-grenoble-alpes.fr},
    \emails{{Jean-Guillaume.Dumas,Aude.Maignan,Clement.Pernet}@univ-grenoble-alpes.fr}{Jean-Guillaume.Dumas@univ-grenoble-alpes.fr,Aude.Maignan@univ-grenoble-alpes.fr,Clement.Pernet@univ-grenoble-alpes.fr},
    \email{dejonghe.melanie63@gmail.com}.
}
\and Jean-Guillaume Dumas\footnotemark[1]
\and Michael Hanling\footnote{United States Naval Academy,
  Annapolis, Maryland, United States.
  \email{mikehanling@gmail.com},
  \email{roche@usna.edu}
}
\and M\'elanie de Jonghe\footnotemark[1]
\and Aude Maignan\footnotemark[1]
\and Cl\'ement Pernet\footnotemark[1]
\and Daniel S.\ Roche\footnotemark[2]
}

\maketitle
\begin{abstract}
Proofs of Retrievability (PoRs) are protocols which allow a client to
store data remotely and to efficiently ensure, via audits, that the
entirety of that data is still intact. A \emph{dynamic} PoR system also
supports efficient retrieval and update of any small portion of the
data. We propose new, simple protocols for dynamic PoR that are designed
for practical efficiency, trading decreased persistent storage for
increased server computation, and show in fact that this tradeoff is
inherent via a lower bound proof of time-space for any PoR scheme.
Notably, ours is the first dynamic PoR which does not require any
special encoding of the data stored on the server, meaning it can be
trivially composed with any database service or with existing techniques
for encryption or redundancy. Our implementation and deployment on
Google Cloud Platform demonstrates our solution is scalable: for
example, auditing a 1TB file takes just {less than 5 minutes}
and costs less than {\$0.08 USD}. We also present several further enhancements, reducing
the amount of client storage, or the communication bandwidth, or
allowing \emph{public verifiability}, wherein any untrusted third party
may conduct an audit.
\end{abstract}

\section{Introduction}
While various computing metrics have accelerated and slowed over the last
half-century, one which undeniably continues to grow quickly is
data storage. One recent study estimated the world's storage capacity at
4.4ZB ($4.4\cdot{}10^{21}$), and growing at a rate of 40\% per year
\cite{StoragesAreNotForever2017}. Another study %
group estimates that by 2025, half of the world's data will be stored
remotely, and half of that will be in public cloud storage
\cite{DataAge2025}.

As storage becomes more vast and more outsourced, users and
organizations need ways to ensure the \emph{integrity} of their data --
that the service provider continues to store it, in its entirety, unmodified.
Customers may currently rely on the reputations of large cloud companies
like IBM Cloud or Amazon AWS, but even those can suffer
data loss events \cite{aws-loss-2019,google-loss-2015},
and as the market continues to grow, new
storage providers without such long-standing reputations need
cost-effective ways to convince customers their data is intact.

This need is especially acute for the growing set of
\emph{decentralized storage networks} (DSNs),
such as
\href{https://filecoin.io/filecoin.pdf}{Filecoin},
\href{https://storj.io/}{Storj},
\href{https://sia.tech/}{Sia},
\href{https://safenetwork.tech/}{SAFE Network},
and \href{https://www.pp.io/}{PPIO}%
\footnote{
\url{https://filecoin.io},
\url{https://storj.io},
\url{https://sia.tech},
\url{https://safenetwork.tech},
\url{https://www.pp.io}.}%
,
that act to connect
users who need their data stored with providers
(``miners'') who will be paid to store users' data. In DSNs,
integrity checks are useful at two levels: from the customer
who may be wary of trusting blockchain-based networks, and within
the network to ensure that storage nodes are actually providing their
promised service.  Furthermore, storage nodes whose sole aim is to earn
cryptocurrency payment have a strong incentive to cheat, perhaps by
deleting user data or thwarting audit mechanisms.

The research community has developed a wide array of solutions to the
remote data integrity problem over the last 15 years. Here we merely
summarize the main lines of work and highlight some shortcomings that
this paper seeks to address.

\paragraph{Provable Data Possession (PDP).}
PDP audits \cite{juels2007pors,Erway,stefanov2012iris} are
efficient methods to ensure that a large fraction of data
has not been modified. They generally work by computing a small
\emph{tag} for each block of stored data, then randomly sampling a
subset of data blocks and corresponding tags, and computing a check over
that subset.

Because a server that has lost or deleted a constant fraction of
the file will likely be unable to pass an audit, PDPs are useful in
detecting catastrophic or unintentional data loss. They are also quite
efficient in practice. However, \emph{a server who deletes only a few
blocks is still likely to pass an audit}, so the security guarantees are
not complete, and may be inadequate for critical data storage or
possibly-malicious providers.

\paragraph{Proof of Retrievability (PoR).}

PoR audits, starting with \cite{ateniese2007provable},
have typically used techniques such as error-correcting codes,
and more recently Oblivious RAM (ORAM), in order to obscure from
the server where pieces of the file are stored
\cite{Lavauzelle:2016:ldcpor,DVW-por-09}.
Early PoR schemes did not provide an efficient update mechanism to alter
individual data blocks, but more recent \emph{dynamic} schemes have
overcome this shortcoming~\cite{Shi:2013:orampor,Cash:2017:DPR}.

A successful PoR audit
provides a strong guarantee of retrievability: if the server altered
many blocks, this will be detected with high probability, whereas if
only few blocks were altered or deleted, then the error correction means
the file can still likely be recovered. Therefore, a single successful
audit ensures with high probability that the \emph{entire} file is still
stored by the server.

The downside of this stronger guarantee is that PoRs have
typically used more sophisticated cryptographic tools than PDPs,
and in all cases we know of
require \emph{multiple times the original data size for persistent
remote storage}. This is problematic from a cost standpoint: if a PoR
based on ORAM requires perhaps 10x storage on the cloud, this cost
may easily overwhelm the savings cloud storage promises to provide.

For our purpose, we have identified two main storage outsourcing type
of approaches: those which minimizes the storage overhead and
those which minimize the client and server computation. For each
approach, we specify in Table~\ref{tab:spec} which one  meets
various requirements such as whether or not they are dynamic, if they
can answer an unbounded number of queries and what is the extra
storage they require.

\begin{table}[htbp]\centering\small
\renewcommand{\arraystretch}{.75} %
\caption{Attributes of some selected schemes} \label{tab:spec}
\begin{tabular}{lcccc}
\toprule
         & PoR &\multicolumn{2}{c}{Number of} & Extra \\
Protocol & capable &audits & updates&  Storage\\
\midrule
Seb\'e \cite{Sebe:2008:EfficientRD} & \xmark & {\Large $\infty$}& \xmark &$o(N)$ \\
Ateniese et al. \cite{ateniese2007provable} &\xmark & {\Large $\infty$}& \xmark & $o(N)$ \\
Ateniese et al.  \cite{ateniese2008scalable} & \xmark& $O(1)$ & $O(1)$ &$o(N)$\\
Storj \cite{Storj:2016:por}& \cmark& $O(1)$ &{\Large $\infty$} & $o(N)$ \\
Juels et al. \cite{juels2007pors} & \cmark & $O(1)$ &  \xmark&  $O(N)$ \\
Lavauzelle et al. \cite{Lavauzelle:2016:ldcpor} & \cmark & {\Large $\infty$} & \xmark & $O(N)$  \\
Stefanov et al. \cite{stefanov2012iris}& \cmark & {\Large $\infty$} & {\Large $\infty$} &$O(N)$ \\
Cash et al. \cite{Cash:2017:DPR}& \cmark & {\Large $\infty$} & {\Large $\infty$} &$O(N)$ \\
Shi et al. \cite{Shi:2013:orampor}& \cmark & {\Large $\infty$} & {\Large $\infty$} &$O(N)$ \\
Here & \cmark & {\Large $\infty$} & {\Large $\infty$} & $o(N)$  \\
\bottomrule
\end{tabular}
\end{table}
\Cref{sec:sota} gives a detailed comparison with prior work.
\paragraph{Proof of Replication (PoRep) and others.}

While our work mainly falls into the PoR/PDP setting, it also has
applications to more recent and related notions of remote storage
proofs.

Proofs of space were originally proposed as an alternative to the
computation-based puzzles in blockchains and anti-abuse mechanisms
\cite{ABFG-pos-2014,DFKP-pos-2015}, and require verifiable storage of a
large amount of essentially-random data.
A PoRep scheme (sometimes called \emph{Proof of Data Reliability})
aims to combine the ideas of proof of space and PoR/PDP
in order to prove that \emph{multiple copies} of a
data file are stored remotely. This is important as, for example, a
client may pay for 3x redundant storage to prevent data loss, and wants
to make sure that three actual copies are stored in distinct locations.
Some PoRep schemes employ slow encodings and time-based audit checks;
the idea is that a server does not have enough time to re-compute the
encoding on demand when an audit is requested, or even to retrieve it
from another server, and so must actually
store the (redundantly) encoded file
\cite{Armknecht:2016:mirror,fisch-porep,Vasilopoulos:2019:portos,Cecchetti:2019:PIE}.
The Filecoin network employs this type of verification.
A different and promising approach, not based on timing assumptions, has recently been
proposed by \cite{DGO-porep-19}.
An important property of many recent PoRep schemes is \emph{public
verifiability}, that is, the ability for a third party (without secrets)
to conduct an audit. This is crucial especially DSNs.

Most relevant for the current paper is that \emph{most of these schemes
directly rely on an underlying PDP or PoR} in order to verify encoded
replica storage. For example, \cite{DGO-porep-19} states that their
protocol directly inherits any security and efficiency properties of the
underlying PDP or PoR.

We also point out that, in contrast to our security model, many of these
works are based on a \emph{rational actor model}, where it is not in a
participant's financial interest to cheat, but a malicious user may
break this guarantee, and furthermore that most existing PoRep schemes
do not support \emph{dynamic} updates to individual data blocks.

\subsection{Our Contributions}

We present a new proof of retrievability which has the following
advantages compared to existing PDPs and PoRs:
  \par{\bf Near-optimal persistent storage.}
    The best existing PoR protocols that we could find require between
    $2N$ and $10N$
    bytes of cloud storage to support audits of an $N$-byte
    data file, making these schemes impractical in many settings. Our
    new PoR requires only $N + O(N/\log N)$ persistent storage.
  \par{\bf Simple cryptographic building blocks.}
    Our basic protocol relies only on small-integer arithmetic and a
    collision-resistant hash function, making it very efficient in
    practice. Indeed, we demonstrate that 1TB of data can be
    audited in {less than $5$ minutes at a monetary cost of just $0.08$ USD}.
  \par{\bf Efficient partial retrievals and updates.}
    That is, our scheme is a \emph{dynamic} PoR, suitable to large
    applications where the user does not always wish to re-download the
    entire file.
  \par{\bf Provable retrievability from malicious servers.}
    Similar to the best PoR protocols, our scheme supports data recovery
    (\emph{extraction}) via rewinding audits. This means, in particular,
    that there is only a negligible chance that a server can pass a
    \emph{single} audit and yet not recover the entirety of stored data.
  \par{\bf (Nearly) stateless clients.}
    With the addition of a symmetric cipher, the client(s) in our
    protocol need only store a single decryption key and hash digest,
    which means multiple clients may easily share access (audit
    responsibility) on the same remote data store.
  \par{\bf Public verifiability.}
    We show a variant of our protocol, based on the difficulty of
    discrete logarithms in large groups, that allows any third party to
    conduct audits with no shared secret.

Importantly, because our protocols store the data unencoded on the
server, they can trivially be used within or around any existing
encryption or duplication scheme, including most PoRep constructions.
We can also efficiently support arbitrary server-side applications, such
as databases or file systems with their own encoding needs.
The main drawback of our schemes is that, compared to existing PoRs,
they have a higher asymptotic complexity for server-side computation
during audits, and (in some cases) higher communication bandwidth
during audits as well. However, we also provide a time-space lower bound
that proves \emph{any PoR scheme} must make a tradeoff between
persistent space and audit computation time.

Furthermore, we demonstrate with a complete implementation and
deployment on Google Compute Platform that \emph{the tradeoff we make is
highly beneficial in cloud settings}. Intuitively, a user must pay for
the computational cost of audits only when they are actually happening,
maybe a few times a day, whereas the extra cost of (say) 5x persistent
storage \emph{must be paid all the time}, whether the client is
performing audits or not.

\subsection{Organization}

The rest of the paper is structured as follows:
    \Cref{sec:model} defines our security model, along the lines of
    most recent PoR works;
    \Cref{sec:lowerbound} contains our proof of an inherent
    time-space tradeoff in any PoR scheme;
    \Cref{sec:simple} gives an overview and description of our basic
    protocol, with detailed algorithms and security proofs delayed
    until \cref{sec:formal};
    {the latter thus presents the formal setting, and}
    also contains
    a publicly verifiable variant;
    \Cref{sec:impl} discusses the results of our open-source implementation and
    deployment on Google Compute Platform.
\section{Security model}
\label{sec:model}
\newcommand{\ttb}[1]{\ensuremath{\text{\bf\texttt{#1}}}}
\newcommand{\client}{\ensuremath{\mathcal{C}}}
\newcommand{\server}{\ensuremath{\mathcal{S}}}
\newcommand{\clstate}{\ensuremath{st_\client}}
\newcommand{\pubclstate}{\ensuremath{pub_\client}}
\newcommand{\prvclstate}{\ensuremath{prv_\client}}
\newcommand{\servstate}{\ensuremath{st_\server}}
\newcommand{\poly}{\mathsf{poly}}
\newcommand{\negl}{\mathsf{negl}}
\newcommand{\db}{\ensuremath{M}}
\newcommand{\Init}{\ttb{Init}}
\newcommand{\Read}{\ttb{Read}}
\newcommand{\Write}{\ttb{Write}}
\newcommand{\Audit}{\ttb{Audit}}
\newcommand{\Extract}{\ttb{Extract}}
\newcommand{\IsValid}{\ttb{IsValid}}
\newcommand{\accept}{\ttb{accept}}
\newcommand{\reject}{\ttb{reject}}
\newcommand{\fail}{\ttb{fail}}
\newcommand{\nextract}{\ensuremath{e}}
\newcommand{\nextractb}{\ensuremath{e}}
\newcommand{\adversary}{{\ensuremath{\overline{\server}}}}
\newcommand{\observer}{\ensuremath{\mathcal{O}}}

We define a dynamic PoR scheme as consisting of the following five
algorithms between a client \client{} with state \clstate{} and
a server \server{} with state \servstate{}. Our definition is the same
as given by \cite{Shi:2013:orampor}, except that we follow
\cite{juels2007pors} and include the
\Extract{} algorithm in the protocol explicitly.

A subtle but important point to note is that, unlike the first four
algorithms, \Extract{} is not really intended to be used in practice.
In typical usage, a cooperating and honest server will pass all audits,
and the normal \Read{} algorithm would be used to retrieve any or all of
the data file reliably. The purpose of \Extract{} is mostly to prove
that the data is recoverable by a series of random, successful audits,
and hence that the server which has deleted even one block of data has
negligible chance to pass a single audit.

Our definitions rely on two distinct security parameters,
\compsec{} for computational security and \statsec{} for statistical
security.
Typically values of $\compsec \ge 128$ and
$\statsec \ge 40$ are considered secure \cite{Evans:2018:SMCintro}.
One may think of \compsec{} having to do with offline attacks
and \statsec{} corresponding only to online attacks which require
interaction and where the adversary is more limited.
Carefully tracking both security parameters in our analysis will
allow us to more tightly tune performance without
sacrificing security.

  The {s}erver computation in all these algorithms is
  deterministic while
the client may use random coins for any
algorithm; at a minimum, the \Audit{} algorithm \emph{must} be
randomized in order to satisfy retrievability non-trivially.

\begin{itemize}[nosep]
\item $(\clstate, \servstate)\leftarrow\Init(1^\compsec,1^\statsec,\mtblock,\db)$:
  On input of the security parameters and the database $\db$,
  consisting of $N$ bits arranged in blocks of $\mtblock$ bits,
  outputs the client state $\clstate$ and the server state $\servstate$.
\item $\{m_i,\reject\}\leftarrow\Read(i, \clstate,
  \servstate)$: On input of an index $i\in{}1..\mtnbblock$, the client state
  $\clstate$ and
  the server state $\servstate$, outputs $m_i = \db[i]$ or \reject{}.
\item $\{(\clstate',
  \servstate'),\reject\}\leftarrow\Write(i, a, \clstate,
  \servstate)$: On input of an index $i\in{}1..\mtnbblock$, data $a$, the client
  state $\clstate$ and the server state $\servstate$, outputs a
  new client state $\clstate'$ and a new server state $\servstate'$,
  such that now $\db[i]=a$, or \reject{}.
\item $\{\pi,
  \reject\}\leftarrow\Audit(\clstate, \servstate)$ : On
  input of the client state $\clstate$ and the server state $\servstate$,
  outputs a successful transcript \(\pi\) or \(\reject\).
\item \(\db \gets \Extract(\clstate, \pi_1, \pi_2, \ldots, \pi_\nextract)\):
  On input of independent \Audit{} transcripts
  \(\pi_1,\ldots,\pi_\nextract\), outputs the database \db{}.
  The number of required transcripts \nextract{} must be a polynomially-bounded function of
  \(N\), \(\mtblock\), and \(\compsec\).
\end{itemize}

\subsection{Correctness}
A correct execution of the algorithms by {an} honest client
and {an} honest server
results in audits being accepted and reads to recover the last updated
value of the database. More formally, correctness~is:
\begin{definition}[Correctness]\label{def:corr}
  For any parameters \(\compsec,\statsec,N,\mtblock\), there exists a predicate
  \IsValid{} such that, for any database \(\db\) of \(N\) bits,
  \(\IsValid(\db, \Init(1^\compsec,1^\statsec,\mtblock,\db))\). Furthermore, for any state
  such that \(\IsValid(\db,(\clstate,\servstate))\) and any index
  \(i\) with \(0\le i<\mtnbblock\), we have
  \begin{itemize}[nosep]
    \item \(\Read(i, \clstate, \servstate) = \db[i]\);
    \item \(\IsValid(\db',\Write(i,a,\clstate,\servstate))\),
      where \(\db'[i]=a\) and the remaining \(\db'[j]=\db[j]\) for every \(j\ne i\);
    \item \(\Audit(\clstate,\servstate) \ne \reject\);
    \item For \nextract{} audits \(\Audit_1,\ldots,\Audit_\nextract\)
      with independent randomness,
      with probability \(1 - \negl(\statsec)\):
  \begin{multline*}
\Extract(\clstate,\Audit_1(\clstate,\servstate),\ldots,\\\Audit_\nextract(\clstate,\servstate))=M.
\end{multline*}
  \end{itemize}
\end{definition}

Note that, even though \client{} may use random coins in the algorithms,
a correct PoR by this definition should have no chance of returning
\reject{} in any
\Read{}, \Write{} or \Audit{} with an honest client and server.

\subsection{Authenticity and attacker model}\label{ssec:authenticity}
The authenticity requirement stipulates that
the client can always detect (except with negligible probability) if
any message sent by the server deviates from honest behavior.
To distinguish between public and private verification,
{we consider now two types of client: a \emph{Writer} who
can run any of the \Init{}, \Write{}, \Read{}, or \Audit{} algorithms;
and a \emph{Verifier} that can only run the last two.}
{Accordingly,}
we 
split the client state in two parts: the secret values,
{\prvclstate}, and the published ones,
{\pubclstate, so that
  $\clstate=\pubclstate\cup\prvclstate$.}

In a privately verifiable protocol, detecting such
deviations requires both secret and public values.

In a publicly verifiable protocol we distinguish
the \Init{} and \Write{} algorithms
{(which use the full client state \clstate)}
from the \Read{} and \Audit{} ones {(which use only the
  public part \pubclstate)}.
Detecting a deviation in  \Init{} and \Write{} still require the full
client state (both the secret and public parts), while detecting a
deviation in  \Read{} and \Audit{} must be possible using only the
public parts of the client state.
We use the following game between
two observers $\observer_1$ and $\observer_2$,
a potentially {\em malicious} server $\adversary$ and an honest server $\server$
for the adaptive version of authenticity.
  This is the game of~\cite{Shi:2013:orampor}, generalized to
  exhibit a public/private distinction:

\begin{enumerate}[nosep]
\item $\adversary$ chooses an initial memory $\db$.
$\observer_1$ runs \Init{}, sends {\pubclstate{}} to
$\observer_2$
and sends the initial memory layout $\servstate$
to both~$\adversary$  and~$\server$.
\item For a polynomial number of steps $t = 1, 2, . . . , poly(\statsec)$,
$\adversary$ picks an operation $op_t$ where operation $op_t$ is either
\Read{}, \Write{} or \Audit{}.
$\observer_1$ and $\observer_2$ execute their respective operations
with both~$\adversary$ and~$\server$.
\item $\adversary$ is said to win the game, if any message sent
  by~$\adversary$ differs from that of~$\server$ and
  neither~$\observer_1$, nor~$\observer_2$,
  did output \reject{}.
\end{enumerate}

\begin{definition}[Public Verifiability]\label{def:pubauth}
  A PoR scheme
  satisfies
    \emph{public adaptive authenticity} (or
    \emph{public verifiability}),
  if no polynomial-time
  adversary $\adversary$ has more than negligible probability in
  winning the above security game.
\end{definition}

\begin{definition}[Authenticity]\label{def:privauth}
  A PoR scheme
  satisfies \emph{private adaptive authenticity} (or just
  \emph{adaptive authenticity}), if no polynomial-time adversary
  $\adversary$ has more than negligible probability in winning the
  above security game when $\observer_1$ also plays the role
  of~$\observer_2$.
\end{definition}

\subsection{Retrievability}
Intuitively, the retrievability requirement
stipulates that whenever a malicious server can pass the
audit test with high probability, the server must
know the entire memory contents $\db$.
To model this,~\cite{Cash:2017:DPR} uses a \emph{blackbox rewinding
access}: from the state of the server before any passed audit, there
must exist an extractor algorithm that can reconstruct the complete
correct database.
As in~\cite{Shi:2013:orampor}, we insist furthermore that the extractor
does not use the complete server state, but only the transcripts from
successful audits. In the following game, note that the observer
  \(\observer_1\)
running the honest client algorithms may only update its
state \clstate{} during \Write{} {operations}, and
{that} the \Audit{} {operations} are independently
randomized from the client side, but we make
no assumptions about the state of the adversary \adversary{}.
\begin{enumerate}[nosep]
\item  $\adversary$ chooses an initial database $\db$.
  {The observer} runs \Init{} and sends the initial memory layout
  $\servstate$ to $\adversary$;
\item For $t = 1, 2, . . . , poly(\statsec)$, the adversary
  $\adversary$ adaptively chooses an operation $op_t$ where
  $op_t$ is either \Read{}, \Write{} or \Audit{}.
  The observer executes the respective algorithms with
  $\adversary$, updating \clstate{} and \db{} according to the \Write{}
  operations specified;
\item The observer runs \(\nextractb\) \Audit{} algorithms with
  \adversary{} and records the outputs
  \(\pi_1,\ldots,\pi_{\nextract'}\) of those which did not return \reject{},
  where \(0\le \nextract' \le \nextractb\).
\item The adversary~\adversary{} is said to with the game if
  \(\nextract'\ge\nextractb /2\) and
  \(\Extract(\clstate,\pi_1,\ldots,\pi_\nextract)\ne\db\).
\end{enumerate}

\begin{definition}[Retrievability]\label{def:retr}
A PoR scheme satisfies
retrievability if no polynomial-time adversary
$\adversary$ has more than negligible probability in winning the
above security game.
\end{definition}

\section{Time-space tradeoff lower bound}
\label{sec:lowerbound}
\newcommand{\Recover}{\ttb{Recover}}

The state of the art in Proofs of Retrievability
schemes consists of some approaches with a low audit cost but a high
storage overhead (e.g., \cite{juels2007pors,Shi:2013:orampor,Cash:2017:DPR}) and some schemes with a low
storage overhead but high computational cost for the server during
audits (e.g.,
\cite{ateniese2007provable,Sebe:2008:EfficientRD,shacham2008compact}).

Before presenting our own constructions (which fall into the latter
category) we prove that there is indeed an inherent tradeoff in any PoR
scheme between the amount of extra storage and the cost of performing
audits. By \emph{extra storage} here we mean exactly the number of extra
bits of persistent memory, on the client or server, beyond the
bit-length of the original database being represented.

\Cref{thm:lowerbound} below shows that, for any PoR scheme with
sub-linear audit cost, we have
\begin{equation}
  (\text{extra storage size})
  \cdot \frac{\text{audit cost}}{\log(\text{audit cost})}
  \in \Omega(\text{data size}).
\end{equation}

None of the previous schemes, nor those which we
present, make this lower bound tight. Nonetheless, it demonstrates that
a ``best of all possible worlds'' scheme with, say, $O(\sqrt{N})$ extra
storage and $O(\log N)$ audit cost to store an arbitrary $N$-bit database,
is impossible.

The proof is by contradiction, presenting an attack on an arbitrary PoR
scheme which does not satisfy the claimed time/space lower bound. Our
attack consists of flipping $k$ randomly-chosen bits of the storage.
First we show
that $k$ is small enough so that the audit probably does not examine any
of the flipped bits, and still passes. Next we see that $k$ is large
enough so that, for some choice of the $N$ bits being represented,
flipping $k$ bits will, with high probability, make it impossible for
any algorithm to correctly recover the original data. This is a
contradiction, since the audit will pass even though the data is lost.

Readers familiar with coding theory will notice that the second part of
the proof is similar to Hamming's bound for the minimal distance of a
block code. Indeed, view the original $N$-bit data as a
\emph{message}, and the storage using $s+c$ extra bits of memory as an
$(N+s+c)$-bit \emph{codeword}: a valid PoR scheme must be able to
extract (\emph{decode}) the original message from an $(N+s+c)$-bit string,
or else should fail any audit.

\begin{restatable}{theorem}{thmlowerbound}\label{thm:lowerbound}
  For any Proof of Retrievability scheme which stores an arbitrary
  database of $N$~bits, uses at most $N+s$~bits of persistent memory on
  the server, $c$~bits of persistent memory on the client,
  and requires at most $t$~steps to perform an audit. Assuming~$s\ge{0}$, then
  either
  $t{>}\frac{N}{4}$, or
  \begin{equation}
    (s+c)\frac{t}{\log_2 t} \ge \frac{N}{12}.
    \end{equation}
\end{restatable}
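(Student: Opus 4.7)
The plan is to prove the bound by contradiction. Suppose $t\le N/4$ yet $(s+c)\,t/\log_2 t < N/12$; I will construct an adversarial server that wins the retrievability game of \Cref{def:retr}. Writing $n=N+s+c$ for the total persistent storage, I view each fixed client state \clstate{} as inducing a code: the server states $\sigma_0(M)$ associated to messages $M\in\{0,1\}^N$ under \Init{} form a subset of $\{0,1\}^{N+s}$, and (by correctness) repeated calls to \Audit{} followed by \Extract{} act as a decoder for that code. The adversary simply flips $k\approx (N+s)/(2t)$ carefully chosen bits of the server state.

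For audits to pass: each audit runs the server for at most $t$ steps, so it touches at most $t$ bits of persistent memory. A union bound over those $t$ positions shows that the probability a given audit observes any of the $k$ modified bits is at most $kt/(N+s)\le 1/2$; thus each audit accepts with probability $\ge 1/2$, and a Chernoff bound over the $e$ audits yields $e'\ge e/2$ with overwhelming probability, satisfying the first winning condition of \Cref{def:retr}.

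For extraction to fail: by pigeonhole, some client state \clstate{} is shared by $\ge 2^{N-c}$ different codewords. Hamming's sphere-packing inequality applied to this sub-code says that correcting $k$ server-bit errors requires $2^{N-c}\cdot V(k)\le 2^{N+s}$, i.e., $V(k)\le 2^{s+c}$, where $V(k)=\sum_{i=0}^{k}\binom{N+s}{i}$. Using $V(k)\ge((N+s)/k)^{k}$ with $k=(N+s)/(2t)$ gives $\log_2 V(k)\ge\frac{N+s}{2t}\log_2(2t)\ge \frac{N\log_2 t}{2t}$. The contradiction hypothesis gives $s+c<\frac{N\log_2 t}{12t}$, so $V(k)>2^{s+c}$ and the sub-code has minimum distance $\le 2k$. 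The adversary picks two codewords $\sigma_0(M_0), \sigma_0(M_1)$ at distance $\le 2k$ sharing this common \clstate{}, starts from $\sigma_0(M_0)$, and flips half of the positions where they disagree, producing a state $\sigma$ within distance $k$ of both codewords. The extractor's output on $\sigma$ is a single value and therefore disagrees with at least one of $M_0,M_1$; choosing that $M_i$ as the initial database makes the adversary win.

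The main obstacle I anticipate is tracking constants: the sphere-packing inequality is tight, but the union bound on the audit side, the Chernoff loss over the $e$ audits, and loose estimates like $V(k)\ge((N+s)/k)^{k}$ and $\log_2(2t)\ge\log_2 t$ all introduce slack. Fortunately the statement only asks for $N/12$, so these generous choices of threshold on $kt/(N+s)$ and of $k$ itself should comfortably absorb the losses without requiring any finer argument.
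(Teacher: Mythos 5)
Your high-level plan---prove by contradiction, flip roughly $(N+s)/t$ server bits, apply a union bound so audits still pass, and a sphere-packing-style count so extraction must fail---closely parallels the paper's, but there is a real gap in how the two halves are combined. The union bound ``an audit touches at most $t$ positions of $\servstate$, so it misses the $k$ flipped positions except with probability $\le kt/(N+s)$'' is only sound when the $k$ flipped positions are chosen \emph{uniformly at random}, independently of the positions the audit queries. Your adversary instead flips a \emph{deterministic} set: half the coordinates on which two close codewords $\sigma_0(M_0)$ and $\sigma_0(M_1)$ disagree. Nothing prevents the scheme's \Audit{} algorithm from always reading exactly those coordinates. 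The sphere-packing argument only guarantees that \emph{some} pair of codewords lies within distance $2k$; it says nothing about whether the coordinates on which they differ avoid the audit's probes, and in an adversarially designed scheme they need not.

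The paper avoids this by flipping a uniformly random $k$-subset of $\servstate$, which makes the union bound valid, and it replaces your minimum-distance argument with a direct count: there are $\binom{N+s}{k}$ corrupted states reachable by such a flip, but at most $2^{s+c}$ of them can decode back to $\db_0$, so a random flip mis-decodes with high probability. A secondary issue: with $kt/(N+s)\le 1/2$ each audit accepts only with probability $\ge 1/2$, so a Chernoff bound does \emph{not} give $e'\ge e/2$ ``with overwhelming probability''; the paper instead sets $k\approx(N+s)/(4t)$ so each audit accepts with probability $\ge 3/4$, and then applies Markov. Neither of these is mere constant-tracking that the slack in $N/12$ would absorb---the first breaks the attack outright, and the second requires a different choice of $k$.
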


\begin{proof}
  First observe that $N=0$ and $t=0$ are both trivial cases: either the
  theorem is always true, or the PoR scheme is not correct. So we assume
  always that $N\ge 1$ and $t\ge 1$.

  By way of contradiction, suppose a valid PoR scheme exists with
  $s \ge 0$, $t \le \frac{N}{4}$, and
  \begin{equation}
    (s+c)\frac{t}{\log_2 t} < \frac{N}{12}.
      \label{eqn:lbcontra}
  \end{equation}

  Following the definitions in \cref{sec:model}, we consider only the
  \Audit{} and \Extract{} algorithms.
  The \Audit{} algorithm may be randomized and, by
  our assumption, examines at most $t$ bits of the underlying memory.
  At any point in an \emph{honest} run of the algorithm, the server
  stores a \((N+s)\)-bit string \servstate{}, the client stores a
  \(c\)-bit string \clstate{}, and the \emph{client virtual memory} in
  the language of \cite{Cash:2017:DPR} is the unique \(N\)-bit string
  \db{} such that \(\IsValid(\clstate,\servstate,\db)\).

  Define a map
  \(\phi: \{0,1\}^{N+s+c} \to \{0,1\}^N\) as follows.
  Given any pair \((\clstate,\servstate)\) of length-\(N+s\) and
  length-\(c\) bit strings, run
  \(\Extract(\clstate,\Audit_1(\clstate,\servstate),\ldots,
    \Audit_\nextract(\clstate,\servstate))\)
  repeatedly over all possible choices of randomness, and record the
  majority result. By \cref{def:corr}, we have that
  \(\phi(\clstate,\servstate)=\db\) whenever
  \(\IsValid(\clstate,\servstate,\db)\).

  Observe that this map \(\phi\) must be onto, and
  consider, for any $N$-bit data string \db{},
  the preimage \(\phi^{-1}(\db)\), which is the set
  of client/server storage configurations \((\clstate,\servstate)\) such that
  \(\phi(\clstate,\servstate)=\db\).
  By a pigeon-hole argument, there must exist some
  string $\db_0$ such that
  \begin{equation}
    \#\phi^{-1}(\db_0) \le \frac{2^{N+s+c}}{2^N} = 2^{s+c}.
    \label{eqn:lbsm0}
  \end{equation}
  Informally, $\db_0$ is the data which is most easily corrupted.

  We now define an adversary \adversary{}
  for the game of \cref{def:retr} as follows:
  On the first step, \adversary{} chooses \(\db_0\) as the initial
  database, and uses this in the \Init{} algorithm to receive server
  state \servstate{}. Next, \adversary{} chooses \(k\) indices uniformly at
  random from the \servstate{} of \((N+s)\) bits
  (where \(k\) is a parameter to be defined next), and flips
  those \(k\) bits in \servstate{} to obtain a \emph{corrupted} state
  \(\servstate'\). Finally, \adversary{} runs the
  honest \Audit{} algorithm \(2\nextract\) times on step 3 of the
  security game, using this corrupted state \(\servstate'\).

  What remains is to specify how many bits \(k\) the adversary should
  randomly flip, so that most of the \(2\nextract\) runs of the \Audit{}
  algorithm succeed, but the following call to \Extract{} does not
  produce the original database \(\db_0\).
  Let
  \begin{equation}
    k = \left\lfloor \frac{N+s}{4t} \right\rfloor.
      \label{eqn:lbk}
  \end{equation}
  We assumed that $s\ge 0$ and $t \le \frac{N}{4}$, thus we have that $k\ge 1$.

  Let \(\clstate\) be the initial client state (which is unknown to
  \adversary{}) in the attack above with initial database \(\db_0\).
  From the correctness requirement (\cref{def:corr}) and the definition
  of \(t\) in our theorem, running
  \(\Audit(\clstate,\servstate)\) must always succeed after examining at
  most \(t\) bits of \servstate{}. Therefore, if
  the $k$ flipped bits in the corrupted server storage \(\servstate'\)
  are not among the (at most) $t$ bits examined by the
  \Audit{} algorithm, it will still pass. By the union bound, the
  probability that a single run of $\Audit{}(\clstate,\servstate')$ passes is at least
  \[
    1 - t\frac{k}{N+s} \ge \frac{3}{4}.
  \]
  This means that the expected number of failures in running
  \(2\nextract\) audits is \(\tfrac{\nextract}{2}\), so the Markov
  inequality tells us that the adversary \adversary{} successfully passes
  at least \nextract{} audits (as required) with probability at least
  \(\tfrac{1}{2}\).
  We want to examine the probability that
  \(\phi(\clstate,\servstate')\ne\db_0\), and therefore that the final
  call to \Extract{} in the security game does not produce \(\db_0\) and
  the adversary wins with high probability.
  Because there are \(\binom{N+s}{k}\) distinct ways to choose the \(k\)
  bits to form corrupted storage \(\servstate'\), and from the upper
  bound of \eqref{eqn:lbsm0} above, the probability that
  \(\phi(\clstate,\servstate')\ne\db_0\) is at least
  \begin{equation}
    1 - \frac{2^{s+c}-1}{\binom{N+s}{k}}.
    \label{eqn:lbprob}
  \end{equation}

  Trivially, if $s+c=0$, then this probability equals 1.
  Otherwise, from the original assumption
  \eqref{eqn:lbcontra}, and because
  $\log_2(4t)/(2t) \le 1$ for all positive integers $t$, we have
  \[
    s+c+2 \le 3(s+c) < \frac{N\log_2 t}{4t} \le
    \left(\frac{N}{4t}-1\right)\log_2(4t).
  \]

  Therefore
  \[
    \binom{N+s}{k} \ge \left(\frac{N+s}{k}\right)^k >
    (4t)^{\tfrac{N+s}{4t} - 1} \ge 2^{s+c+2}.
  \]

  Returning to the lower bound in \eqref{eqn:lbprob}, the
  probability that
  the final \Extract{} does not return \(\db_0\) is
  at least $\tfrac{3}{4}$. Combining with the first part of the proof,
  we see that, with probability at least
  $\tfrac{3}{8}$, the attacker succeeds:
  at least \nextract{} runs of \(\Audit(\clstate,\servstate')\) pass,
  but the final run of \Extract{} fails to produce the correct database
  \(\db_0\).
\end{proof}

\section{Retrievability via verifiable computing}
\label{sec:simple}
We first present a simple version of our PoR protocol.
This version contains the main ideas of our approach, namely, using
matrix-vector products during audits to prove retrievability.
It also makes use of Merkle hash trees during reads and updates to
ensure authenticity.

This protocol uses only \(N + o(N)\) persistent server storage, which is
an improvement to the \(O(N)\) persistent storage of existing PoR
schemes, and is the main contribution of this work. The costs of our
\Read{} and \Write{} algorithms are similar to existing work, but we
incur an asymptotically higher cost for the \Audit{} algorithm, namely
\(O(\sqrt{N})\) communication bandwidth and \(O(N)\) server computation
time. We demonstrate in the next section that this tradeoff between
persistent storage and \Audit{} cost is favorable in cloud computing
settings for realistic-size databases.

Later, in \cref{sec:formal}, we
give a more general protocol and prove it secure according to the PoR
definition in \cref{sec:model}. That generalized version shows how to
achieve \(O(1)\) persistent client storage with the same costs, or
alternatively
to arbitrarily decrease communication bandwidth during \Audit{}s by increasing
client persistent storage and computation time.
\subsection{Overview}

A summary of our four algorithms is shown in \cref{proto:first-scheme},
where dashed boxes are the classical, Merkle hash tree authenticated,
remote read/write operations.

Our idea is to use verifiable computing schemes as, e.g., proposed
in~\cite{Fiore:2012:PVD}.
Our choice for this is to treat the data as a square matrix of dimension
roughly \(\sqrt{N}\times\sqrt{N}\).
This allows for the matrix multiplication verification described
in \cite{Freivalds:1979:certif} to
be used as a computational method for the audit algorithm.

Crucially, this does not require any additional metadata; the database
\(M\) is stored as-is on disk, our algorithm merely treats the machine
words of this unmodified data as a matrix stored in row-major order.
Although the computational complexity for the \Audit{} algorithm is
asymptotically \(O(N)\) for the server, this entails only a single
matrix-vector multiplication, in contrast to some prior work which
requires expensive RSA computations \cite{ateniese2007provable}.

To ensure authenticity also during \Read{} and \Write{} operations, we
combine this linear algebra idea  with a standard Merkle hash tree.

\begin{figure*}[htbp]\centering
  \caption{Client/server PoR protocol with low storage server}\label{proto:first-scheme}
  \begin{tabular}{|c|ccc|}
    \hline
    & Server & Communications & Client \\
    \hline
    \multirow{4}{*}{\Init}&& $N=m n \log_2 q$ & $\uu\random{\F_q^m}$\\
    & & & $\vv\tp\leftarrow{}\uu\tp\MM$. \\
    & \multicolumn{3}{c|}{\(\begin{array}{:rcl:}
      \cdashline{1-3}
      & \multirow{2}{*}{\mtinit} & \longleftarrow \compsec,\statsec, b, \MM \\
      \MM, T_\MM \longleftarrow & & \longrightarrow r_\MM \\
      \cdashline{1-3}
    \end{array}\)} \\
    & Stores $\MM$ and $T_\MM$& & Stores \(\uu,\vv\), and \(r_\MM\) \\
    \hline
    \multirow{2}{*}{\Read}
    & \multicolumn{3}{c|}{\rule{0pt}{8 mm}\(\begin{array}{:rcl:}
      \cdashline{1-3}
      \MM, T_\MM \longrightarrow & \multirow{2}{*}{\mtread} & \longleftarrow i, j, r_\MM \\
      & & \longrightarrow \MM_{ij} \\
      \cdashline{1-3}
    \end{array}\)} \\
    & & & Returns \(\MM_{ij}\) \\
    \hline
    \multirow{3}{*}{\Write}
    & \multicolumn{3}{c|}{\rule{0pt}{8 mm}\(\begin{array}{:rcl:}
      \cdashline{1-3}
      \MM, T_\MM \longrightarrow & \multirow{2}{*}{\mtwrite} & \longleftarrow i, j, \MM_{ij}', r_\MM \\
      \MM', T_\MM' \longleftarrow & & \longrightarrow \MM_{ij}, r_\MM' \\
      \cdashline{1-3}
    \end{array}\)} \\

    & & & $\vv'_j\leftarrow{}\vv_j+\uu_i(\MM'_{ij}-\MM_{ij})$ \\
    & Stores updated $\MM', T_\MM'$ & & Stores updated \(r_\MM', \vv'\) \\

    \hline
    \multirow{2}{*}{\Audit} & & $\overset{\xx}\longleftarrow$ &
    $\xx\random{\F_q^n}$ \\
    & $\yy\gets\MM\xx$ &
    $\overset{\yy}\longrightarrow$ & $\uu\tp\yy\checks{=}\vv\tp\xx$ \\
    \hline
  \end{tabular}
\end{figure*}

\subsection{Matrix based approach for audits} \label{ssec:First-PoR}
The basic premise of our particular PoR is to treat the data, consisting
of $N$ bits , as a matrix
$\MM\in\F_q^{m\times n}$, where $\F_q$ is a suitable finite field
of size $q$,
{and each chunk of $\lfloor \log_2 q\rfloor$ bits is
  considered as an element of~$\F_q$.}
Crucially, the choice of field $\F_q$ detailed below does not require any
modification to the raw data itself; that is, any element of the matrix
$\MM$ can be retrieved in $O(1)$ time.
At a high level, our audit algorithm follows the matrix multiplication
verification technique of \cite{Freivalds:1979:certif}.

In the \Init{} algorithm,
the client chooses a secret random \emph{control vector} $\uu \in \F_q^m$
and computes a second secret control vector $\vv\in\F_q^n$ according to
\begin{equation}\vv\tp= \uu\tp\MM.\end{equation}

Note that \uu{} is held constant for the duration of the storage.
This does not compromise security because no message which depends on
\uu{} is ever sent to the Server. In particular, this means that
multiple clients could use different, independent, control vectors \(\uu\)
as long as they have a way to synchronize \Write{} operations
(modifications of their shared database) over a secure channel.

To perform an audit, the client chooses a random \emph{challenge vector}
$\xx\in\F_q^n$, and asks the server to compute a \emph{response vector}
$\yy\in\F_q^m$ according to
\begin{equation}
\yy=\MM\xx
\end{equation}
Upon receiving the response $\yy$, the client checks two dot
products for equality, namely
\begin{equation}\label{dotprod}
\uu\tp\yy \stackrel{?}{=} \vv\tp\xx.
\end{equation}
The proof of retrievability will rely on the fact that observing several
successful audits allows, with high probability, recovery of the
matrix $\MM$, and therefore of the entire database.

The audit algorithm's cost is mostly in the server's matrix-vector product.
The client's dot products are much cheaper in comparison.
For instance if $m=n$ are close to $\sqrt{N}$,
the communication cost is bounded by $O(\sqrt{N})$  as each vector
has about $\sqrt{N}$ values.
We trade this infrequent heavy computation for {almost} no additional
persistent storage {on the server side}, justified by the significantly cheaper cost of
computation versus storage space.

A sketch of the security proofs is as follows; full proofs are provided
along with our formal and general protocol in \cref{sec:formal}.
The Client knows that the Server sent the correct value of \(\yy\) with
high probability, because otherwise the Server must know something
about the secret control vector \(\uu\) chosen randomly at
initialization time. This
is impossible since no data depending on \(\uu\) was ever sent to the
Server. The retrievability property (\cref{def:retr}) is ensured from the fact that, after
\(\sqrt{N}\) random successful audits, with high probability, the
original data \(\MM\) is the unique solution to the matrix equation
\(\MM\XX=\YY\), where \(\XX\) is the matrix of random challenge vectors
in the audits and \(\YY\) is the
matrix of corresponding response vectors from the Server.

Some similar ideas were used by \cite{Sebe:2008:EfficientRD} for
checking integrity.
However, their security relies on the difficulty of
integer factorization.
Implementation would therefore require
many modular exponentiations at thousands of bits of precision. Our
approach for audits is much simpler and independent of computational hardness
assumptions.

\subsection{Merkle hash tree for updates}\label{ssec:Merkle}
In our protocols, the raw database of size $N$ bits is handled in two different ways.
As seen in the previous section, the audits use chunks of $\lfloor \log_2 q\rfloor$ bits as elements
of a finite field~$\F_q$.
Second, a Merkle hash tree with a different block size \mtblock{}
is used here to
ensure authenticity of individual \Read{} operations.
This tree is a binary tree, stored on the server, consisting
of \(O(N/\mtblock)\) hashes, each of size
\(2\compsec\), for collision resistance.

The Client stores only the root
hash, and can perform, with high integrity assurance,
any read or write operation on a range of \(k\) bytes
in \(O(k + \mtblock + \log(N/\mtblock))\) communication and computation
time.
When the block size is large enough, the extra server storage is
\(o(N)\); for example, \(\mtblock\ge\log N\) means the hash tree can be
stored using \(O(N\compsec/\log N)\) bits.

Merkle hash trees are a classical result, commonly used in practice, and we do
not claim any novelty in our use here \cite{Merkle,rfc6962}.
To that
end, we provide three algorithms to abstract the details of the Merkle
hash tree: \mtinit, \mtread{} and \mtwrite.

These three algorithms are in fact two-party protocols between a Server and a
Client, but without any requirement for secrecy. A vertical bar~\(\mid\) in the
inputs and/or outputs of an algorithm indicates Server input/output on the
left, and Client input/output on the right. When only the Client has
input/output, the bar is omitted for brevity.

The \mtread{} and \mtwrite{} algorithms may both fail to verify a hash,
and if so, the Client outputs \reject{} and aborts immediately.
Our three Merkle tree algorithms are as follows.

\par{\(\mtinit(1^\compsec, \mtblock, \db) \mapsto (\db, T_\db \mid r_\db)\).}
    The Client initializes database \(\db\) for storage in
    size-\mtblock{} blocks. The entire database \db{} is sent to the
    Server, who computes hashes and stores the resulting Merkle hash
    tree \(T_\db\). The Client also computes this tree, but discards all
    hashes other than the root hash \(r_\db\).
    The cost in communication and computation for both parties is bounded by
    \(O(|\db|)=O(N)\).
\par{\(\mtread(\db, T_\db \mid range, r_\db) \mapsto \db_{range}\).}
    The Client sends a contiguous byte range to the server, i.e., a pair
    of indices within the size of \db{}. This range determines which
    containing range of blocks are required, and sends back these block
    contents, along with left and right boundary paths in the hash tree
    \(T_\db\). Specifically, the boundary paths include all left sibling
    hashes along the path from the first block to the root node, and all
    right sibling hashes along the path from the last block to the root;
    these are called the ``uncles'' in the hash tree.
    Using the returned blocks and hash tree values, the
    Client reconstructs the Merkle tree root, and compares with
    \(r_\db\). If these do not match, the Client outputs \reject{}
    and aborts. Otherwise, the requested range of bytes is extracted
    from the (now-verified) blocks and returned.
    The cost in communication and computation time for both
    parties is at most \(O(|range| + \mtblock + \log(N/\mtblock))\).
\par{\(\mtwrite(\db, T_\db \mid range, \db_{range}', r_\db)\)}\,\\
\hspace*{1em}\hfill\(\mapsto (\db', T_\db' \mid \db_{range}, r_\db')\).\\
    The Client wishes to update the data \(\db_{range}'\) in the
    specified range, and receive the \emph{previous value} of that
    range, \(\db_{range}\), as well as an updated root hash \(r_\db\).
    The algorithm begins as \mtread{} with the Server sending all blocks
    to cover the range and corresponding left and right boundary hashes
    from \(T_\db\). After the Client retrieves and verifies the old
    value \(\db_{range}\) with the old root hash \(r_\db\), she updates
    the blocks with the new value \(\db_{range}'\) and uses the same
    boundary hashes to compute the new root hash \(r_\db'\).
    Separately, the Server updates the underlying database \(\db'\) in
    the specified range, then recomputes all affected hashes in
    \(T_\db'\).
    The asymptotic cost is identical to that for the \mtread{} algorithm.

\section{Formalization and Security analysis}\label{sec:formal}
In this section we present our PoR protocol in most general form{;} prove
it satisfies the definitions of PoR correctness, authenticity, and
retrievability{;} analyze its asymptotic performance and present a
variant that also satisfies public verifiability.

Recall that our security definition and protocol rely on two security
parameters: \compsec{} for computational security and \statsec{} for
statistical security. In our main protocol, the only dependence on
computational assumptions comes from the use of Merkle trees and the
hardness of finding hash collisions. The \compsec{} parameter will also
arise when we use encryption to extend the protocol for externalized
storage and public verifiability.

Instead, the security of our main construction mostly depends on the
statistical security parameter \statsec{}. Roughly speaking, this is
because in order to produce an incorrect result that the client will
accept for an audit, the adversary must provably \emph{guess} a result
and try it within the \emph{online} audit protocol; even observing correct
audits does not help the adversary gain an advantage. This intuition,
rigorously analyzed below, allows us to instantiate our protocol more
efficiently while providing strong security guarantees.

\subsection{Improvements on the control vectors}\label{ssec:controlvectors}

The control vectors \uu{} and \vv{} stored by the Client in the
simplified protocol from \cref{sec:simple} can be modified to
increase security and decrease persistent storage or communications.

\paragraph{Security assumptions via multiple checks.}

In order to reach a target bound $2^{-\statsec}$ on the probability of failure for authenticity,
it might \emph{theoretically} be necessary to choose multiple independent \uu{} vectors during initialization
and repeat the audit checks with each one. We will show that in fact
only one vector is necessary for reasonable settings of \statsec{}, but
perform the full analysis here for completeness
and to support a potential evolution of the security requirements.

We model multiple vectors by inflating the vectors $\uu$ and
$\vv$ to be blocks of $t$ non-zero vectors instead; that is, matrices \UU{} and
\VV{} with \(t\) rows each.
To see how large \(t\) needs to be, consider the probability of the Client
accepting an incorrect response during an audit. An incorrect answer $\zz$
to the audit fails to be detected only if
\begin{equation}\label{eq:auth:failure}
  \UU\cdot(\zz-\yy) = \vect{0},
\end{equation}
where $\yy=\MM\xx$ is the correct response which would be returned by
an honest Server, for $\MM\in\F_q^{m{\times}n}$.

If $\UU$ is sampled uniformly at random among matrices in $\F_q^{t
  \times m}$ with non-zero rows, then since the Server
never learns any information about \UU{}, {the} audit fails only if
$(\zz-\yy)\neq{\vect{0}}$ but \UU{} is in its left nullspace.
This happens with probability at most~$1/q^{t}$.

Achieving a probability bounded by $2^{-\statsec}$, requires to set
$t=\left\lceil\frac{\statsec}{\log_2(q)}\right\rceil$.
In practice, reasonable values of \(\statsec=40\) and \(q> 2^{64}\) mean
that \(t=1\) is large enough. If an even higher level of security such
as $\statsec=80$ is required, then still only 2 vectors are needed.

\paragraph{Random geometric progression.}

Instead of using uniformly random vectors $\xx$ and matrices $\UU$, one can impose a structure on them, in order
to reduce the amount of randomness needed, and the cost of communicating or storing them.
We propose to apply Kimbrel and Sinha's
modification of Freivalds' check~\cite{Kimbrel:1993:PAV}:
select a single random field element $\rho$ and form
$\xx\tp=[\rho,\ldots,\rho^{n}]$, thus reducing the communication volume for an audit from $m+n$ to
$m+1$ field elements.

Similarly, we can reduce the storage of $\UU$
by sampling uniformly at random \(t\)
distinct non-zero elements \(s_1,\dots,s_t\) and forming
\begin{equation}\label{eq:ufroms}
\UU=\begin{smatrix}
  s_1 & \cdots &s_1^m \\
  \vdots && \vdots \\
  s_t & \cdots & s_t^m
\end{smatrix} \in \F_q^{t \times m}.
\end{equation}
This reduces the storage on the client side from \(mt + n\) to only \(t + n\) field elements.

Then with a rectangular database and $n>m$,
communications can be potentially lowered to any small target amount,
at the cost of increased client storage and greater client computation during audits.
This structure constraint on $\UU$ impacts the probability of failure
of the authenticity for the audits.
Consider an incorrect answer \zz{} to an audit as in \eqref{eq:auth:failure}.
Then each element \(s_1,\ldots,s_t\) is a root of the
degree-$(m-1)$ univariate polynomial whose coefficients are $\zz-\yy$.
Because this polynomial has at most \(m-1\) distinct roots, the probability
of the Client accepting an incorrect answer is at most
\begin{equation}
\frac{\binom{m-1}{t}}{\binom{q}{t}} \le \left(\frac{m}{q}\right)^t,
\end{equation}
which leads to setting
$t = \left\lceil\frac{\statsec}{\log_2(q)-\log_2(m)}\right\rceil$ in order to bound this probability
by $2^{-\statsec}$. Even if \(N=2^{53}\) for 1PB of storage, assuming \(m\le n\),
and again using \(\statsec=40\) and \(q\ge 2^{64}\), still $t=1$
suffices.

\paragraph{Externalized storage.}

Lastly, the client storage can be reduced to \(O(\compsec)\) by externalizing the storage of the block-vector \VV{}
at the expense of increasing the volume of communication.
Clearly \VV{} must be stored
encrypted, as otherwise the server could answer any challenge without having to store the database.
Any IND-CPA symmetric cipher works here, with care taken so that a separate IV is used for each column;
this allows updates to a column of \VV{} during a \Write{} operation without revealing anything about
the updated values.

In the following we will thus simply assume that the client has access
to an encryption function $E_K{:\F_q\rightarrow\CC}$
({from the field to any ciphertext space $\CC$}) and a decryption
function $D_K{:\CC\rightarrow\F_q}$, both parameterized with a secret key $K$.
In order to assess the authenticity of each communication of the
ciphered $\VV$ from the Server to the client, we will use another
Merkle-Hash tree certificate for it: the client will only need to keep
the root of a Merkle-Tree built on the encryption of $\VV$.
With this, we next show how to efficiently and securely update both
the database and this externalized ciphered control vector. Further,
this ensures non-malleability outside of the encryption scheme:
INT-CTXT (integrity of ciphertexts) together with IND-CPA implies
IND-CCA2~\cite[Theorem~2]{Bellare:2000:AuthEncrypt}.

Since this modification reduces the client storage but increases the
overall communication, we consider
both options (with or without it; \ExternT{} or \ExternF{}), and we
state the algorithms for our protocol with a \textit{Strategy}
parameter, deciding whether or not to externalize the storage
of~$\VV$.

\begin{table*}[htbp]\centering
\caption{Proof of retrievability via rectangular verifiable computing with structured vectors}\label{tab:porvcboth}
{\normalfont\footnotesize
$N=mn\log_2{q}$ is the size of the database, $\compsec\geq\statsec$
are the computational and statistical security parameters,
$\mtblock{}>\compsec\log N$ is the Merkle tree block size.\\\vspace{-2pt}
Assume \(\log_2 q\) is a constant.}\\

\begin{tabular}{|c|c||c||cc||cc|}
\hline
\multicolumn{2}{|c||}{}& Server & \multicolumn{2}{c||}{Communication}
& \multicolumn{2}{c|}{Client} \\
\hline
\multicolumn{2}{|c||}{Strategy} & & \ExternT & \ExternF & \ExternT & \ExternF\\
\hline
\hline
\multicolumn{2}{|c||}{Storage} & $N+ O(N \compsec/\mtblock)$ &
\multicolumn{2}{c||}{} & $O(\compsec)$& $O\left( n \compsec \right)$ \\
\hline
\hline
\multirow{3}{*}{\rotatebox[origin=c]{90}{Comput.}} &Setup & $O(N)$ &
$N+o(N)$ & $N$ & \multicolumn{2}{c|}{$O(N)$}\\
&Audit & $N$ & $O(m+n\compsec)$ & $O(m)$ & \multicolumn{2}{c|}{$O(\compsec(m+n))$} \\
&Read/Write & $O(\mtblock + \compsec \log N)$ &
\multicolumn{2}{c||}{$O(\mtblock + \compsec \log N )$} &
\multicolumn{2}{c|}{$O\left(\mtblock + \compsec \log N \right)$} \\
\hline
\end{tabular}
\end{table*}

\subsection{Formal protocol descriptions}

Full definitions of the five algorithms, \Init{}, \Read{}, \Write{},
\Audit{} and \Extract{}, as
\cref{alg:init,alg:read,alg:write,alg:audit,alg:extract}, are given
below, incorporating the improvements on control vector storage from the
previous subsection.
They include subcalls to the classical Merkle hash tree operations
defined in Section~\ref{ssec:Merkle}.

Then, a summary of the asymptotic costs can be found in
\cref{tab:porvcboth}.

\begin{algorithm}[htb]
  \caption{$\Init(1^\compsec,1^\statsec,m,n,q,\mtblock,M,Strategy)$}\label{alg:init}
  \begin{algorithmic}[1]
 \REQUIRE $1^\compsec,1^\statsec; m,n,q,\mtblock\in\N; \MM \in \F_q^{m\times n}$
 \ENSURE $\servstate{}$, $\clstate{}$

\STATE $t \leftarrow \lceil\statsec/(\log_2 q)\rceil \in \N$;
    \STATE Client: $\svec\random{\F_q^t}$ with non-zero distinct elements\hfill\COMMENT{Secrets}
    \STATE Client: Let \(\UU \gets [\svec_i^j]_{i=1\ldots{}t,j=1\ldots{}m} \in \F_q^{t\times{}m}\)
    \STATE Client: \(\VV \gets \UU\MM \in \F_q^{t\times{}n}\)
      \hfill\COMMENT{Secretly stored or externalized}
    \STATE Both: \((\MM, T_\MM \mid r_\MM) \gets \mtinit(1^\compsec,
    \mtblock, \MM)\)

\IF{$ (Strategy = externalization)$}
 \STATE Client: $K\random{\K}$;
\STATE  Client: $\WW \leftarrow E_K(\VV) \in \CC^{t\times
  n}$;\hfill{\COMMENT{elementwise}}
\STATE Client: sends $m,n,q,\MM,\WW$ to the Server;
\STATE Both: \((\WW, T_\WW \mid r_\WW) \gets \mtinit(1^\compsec, \mtblock, \WW)\)
\STATE Server: $\servstate{}  \leftarrow (m,n,q,\MM,T_\MM,Strategy,\WW,T_\WW)$;
\STATE Client: $\clstate{} \leftarrow (m,n,q,t,\svec, r_\MM,Strategy,K,r_\WW)$;
\ELSE
\STATE Client: sends $m,n,q,\MM$ to the Server;
\STATE Server: $\servstate{}  \leftarrow (m,n,q,\MM,T_\MM,Strategy)$;
\STATE  Client: $\clstate{} \leftarrow (m,n,q,t,\svec, r_\MM,Strategy,\VV)$;
\ENDIF
  \end{algorithmic}
\end{algorithm}

\begin{algorithm}[htb]
  \caption{$\Read(\servstate{},\clstate{}, i,j)$}\label{alg:read}
  \begin{algorithmic}[1]
    \REQUIRE  $\servstate{},$\clstate{}$, i \in [1..m] ,j\in [1..n]$
    \ENSURE $\MM_{ij}$ %
      or \reject
\STATE Both: \(\MM_{ij} \gets \mtread(\MM, T_\MM \mid (i,j), r_\MM)\)
\CRETURN{Client} $\MM_{ij}$
  \end{algorithmic}
\end{algorithm}

\begin{algorithm}[htb]
  \caption{$\Write(\servstate{},\clstate{}, i,j,\MM'_{ij},Strategy)$}\label{alg:write}
  \begin{algorithmic}[1]
\REQUIRE $\servstate{}, \clstate{}, i \in [1..m] ,j\in [1..n], \MM'_{ij} \in \F_q$
\ENSURE $\servstate', \clstate'$ or  \reject
\STATE Both: \((\MM', T_\MM' \mid \MM_{ij}, r_\MM')\)\\
  \hfill \(\gets \mtwrite(\MM, T_\MM \mid (i,j), \MM'_{ij}, r_\MM)\)

\IF{$(Strategy=externalization)$}
\STATE Both: \(\WW_{1..t,j} \gets \mtread(\WW, T_\WW \mid (1..t,j),r_\WW)\)
\STATE Client: $\VV_{1..t,j} \leftarrow D_K(\WW_{1..t,j}) \in \F_q^{ t}$;
\ENDIF
\STATE Client: Let \(\UU_{1..t,i} \gets [\svec_k^i]_{k=1\ldots{}t} \in \F_q^t\)
\STATE Client: $  \VV'_{1..t,j} \leftarrow \VV_{1..t,j} + \UU_{1..t,i}(\MM'_{ij}-\MM_{ij})  \in \F_q^{ t} $;
\IF{$(Strategy=externalization)$}
\STATE Client:  $\WW'_{1..t,j} \leftarrow E_K(\VV'_{1..t,j} ) \in \CC^{ t}$
\STATE Both: \((\WW', T_\WW' \mid \WW_{1..t,j},r'_\WW)\)\\
  \hfill\(\gets \mtwrite(\WW, T_\WW \mid (1..t,j), \WW'_{1..t,j}, r_\WW)\)
\STATE Server: Update $\servstate'$ using \(\MM',T_\MM',\WW'\), and \(T_\WW'\)
\STATE Client: Update $\clstate'$ using \(r_\MM'\) and \(r'_\WW\)
\ELSE
\STATE Server: Update $\servstate'$ using \(\MM'\) and \(T_\MM'\)
\STATE Client: Update $\clstate'$ using \(r_\MM'\) and \(\VV'\)
\ENDIF
  \end{algorithmic}
\end{algorithm}

\begin{algorithm}[htb]
  \caption{$\Audit(\servstate{},\clstate{},Strategy)$}\label{alg:audit}
  \begin{algorithmic}[1]
    \REQUIRE $\servstate{}$, $\clstate{}$
    \ENSURE  \accept{} or \reject
    \STATE Client: $ \rho \random{\F_q}$ and sends it to the Server;
    \STATE Let $\xx\tp\leftarrow[\rho^1,\rho^2, \ldots, \rho^{n}]$
\STATE Server:
$\yy\leftarrow{}\MM\xx \in \F_q^{m}$;\hfill\COMMENT{\MM{} from $\servstate{}$}
\STATE Server: sends $\yy$ to Client;
\IF{$ (Strategy = externalization)$}
\STATE Both: \(\WW \gets \mtread(\WW, T_\WW \mid (1..t,1..n), r_\WW)\);
\STATE Client: $\VV\leftarrow D_K(\WW) \in \F_q^{t\times{}n} $
\ENDIF
\STATE Client: Let \(\UU \gets [\svec_i^j]_{i=1\ldots{}t,j=1\ldots{}m} \in \F_q^{t\times{}m}\)
\IF{\((\UU\yy = \VV\xx)\)}
\CRETURN{Client} \accept
\ELSE
\CRETURN{Client} \reject
\ENDIF
  \end{algorithmic}
\end{algorithm}

\begin{algorithm}[htb]
  \caption{\(\Extract(\clstate{},(\xx_1,\yy_1),\ldots,(\xx_\nextract,\yy_\nextract))\)%
    \label{alg:extract}}
  \begin{algorithmic}[1]
   \REQUIRE \(\clstate{}\) and \(\nextractb \ge 4n + 24\statsec\) audit
      transcripts \((\xx_i,\yy_i)\), of which more than $e/2$ are successful.
    \ENSURE \MM{} or \fail
    \STATE \(\ell_1,\ldots,\ell_k \gets \) indices of \emph{distinct}
      successful challenge vectors \(\xx_{\ell_i}\)
    \IF{\(k < n\)}
      \RETURN \fail
    \ENDIF\hfill\COMMENT{Now $\XX$ is Vandermonde with distinct points}
    \STATE Form matrix \(\XX \gets [\xx_{\ell_1} | \cdots | \xx_{\ell_n}] \in \F_q^{n\times{}n}\)
    \STATE Form matrix \(\YY \gets [\yy{\ell_1} | \cdots | \yy{\ell_n}] \in \F_q^{m\times{}n}\)
    \STATE Compute \(\MM \gets \YY\,\XX^{-1}\)
    \RETURN \(\MM\)
  \end{algorithmic}
\end{algorithm}

\subsection{Security}

Before we begin the full security proof, we need the following technical
lemma to prove that the \Extract{} algorithm succeeds with high probability.
The proof of this lemma is a straightforward application of Chernoff
bounds.

\begin{restatable}{lemma}{chernoff}
\label{lem:chernoff}
  Let \(\statsec,n\ge 1\) and
  suppose \(\nextractb \) balls are thrown independently and uniformly into
  \(q\) bins at random. If \(\nextractb=4n+24\statsec\) and
  \(q\ge{}4\nextractb \), then with probability at least
  \(\exp(-\statsec)\), the number of non-empty bins is at
  least \(\nextractb /2 + n\).
\end{restatable}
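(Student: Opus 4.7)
The plan is to rewrite the count of non-empty bins as a sum of Bernoulli variables and then apply a Hoeffding-style lower tail bound after a careful stochastic domination step. Let $K_i$ denote the number of non-empty bins after the first $i$ balls have been thrown, and let $B_i = K_i - K_{i-1} \in \{0,1\}$ be the indicator that ball $i$ lands in a previously empty bin, so that $K_e = \sum_{i=1}^{e} B_i$. Using $e = 4n + 24\statsec$, the claimed threshold $e/2 + n$ rewrites as $3e/4 - 6\statsec$, which is the form I would aim to lower-bound.

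The central step is a stochastic domination that replaces the $B_i$ (neither independent nor identically distributed) by independent variables of comparable mean. Conditional on the history, $\Pr[B_i = 1 \mid B_1,\ldots,B_{i-1}] = 1 - K_{i-1}/q \ge 1 - (i-1)/q$ since $K_{i-1} \le i-1$. By a standard coupling there exist independent $B_i'' \sim \mathrm{Ber}(1 - (i-1)/q)$ such that $B_i \ge B_i''$ pointwise, hence $\Pr[K_e \le x] \le \Pr[\sum_i B_i'' \le x]$. Using the hypothesis $q \ge 4e$, one obtains $E[\sum_i B_i''] = e - e(e-1)/(2q) \ge 7e/8$.

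Finally I would apply Hoeffding's inequality to the independent $[0,1]$-valued $B_i''$ with deviation $t = 7e/8 - (3e/4 - 6\statsec) = e/8 + 6\statsec$. Expanding $(e/8 + 6\statsec)^2$, the cross term alone yields $2t^2/e \ge 3\statsec$, and crucially this bound on the exponent is independent of $e$ (equivalently, of $n$). Hence the lower tail probability is at most $\exp(-3\statsec) \le \exp(-\statsec)$, which establishes that with probability at least $1 - \exp(-\statsec)$ we have $K_e \ge e/2 + n$.

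The main obstacle I expect is precisely the tightness of the stochastic domination. A crude domination by $\mathrm{Ber}(3/4)$ (valid since $K_{i-1} \le e \le q/4$) only gives mean $3e/4$, which coincides with the target up to the $6\statsec$ slack, leaving essentially no room for concentration when $n$ is large. Using the $i$-dependent parameter $1 - (i-1)/q$ instead is what extracts the extra $e/8$ of slack, and this extra slack is what makes the Hoeffding exponent uniformly bounded below by a constant multiple of $\statsec$ across all regimes of $n$.
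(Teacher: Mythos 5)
Your proof is correct, and it takes a genuinely different route from the paper's. The paper counts \emph{colliding pairs} of balls: it sets $X = \sum_{i<j}\mathbf{1}[B_i=B_j]$, observes that the number of non-empty bins is at least $\nextractb - X$, notes the pair-indicators are negatively correlated so Chernoff applies as if they were independent, and gets $\Pr[X\ge \nextractb/4]\le\exp(-\nextractb/24)$, hence at least $3\nextractb/4 = \nextractb/2+n+6\statsec$ non-empty bins with the desired probability. You instead decompose the non-empty-bin count into its \emph{increments} $B_i$, stochastically dominate them from below by the \emph{independent} but non-identically-distributed $B_i''\sim\mathrm{Ber}(1-(i-1)/q)$ via an explicit coupling, and apply Hoeffding directly to $\sum B_i''$. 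Your observation that the crude domination by $\mathrm{Ber}(3/4)$ is insufficient — and that the $i$-dependent parameter is what buys the extra $\nextractb/8$ of slack so that the cross term in $2t^2/\nextractb$ yields a bound $\exp(-3\statsec)$ uniform in $n$ — is exactly the right diagnosis of where the argument's tightness lies. Comparatively: your approach replaces the (slightly informal) "negatively-correlated indicators may be treated as independent by convexity" step with a fully elementary explicit coupling, which is cleaner to make rigorous; the paper's approach yields the nominally sharper exponent $\exp(-\nextractb/24)=\exp(-n/6-\statsec)$ when $n$ is large, while yours yields $\exp(-3\statsec)$ uniformly — both comfortably beat $\exp(-\statsec)$, which is all the lemma requires. (As you implicitly note, the lemma as stated has a typo: it should read "with probability at least $1-\exp(-\statsec)$," which is what both proofs establish and what the retrievability argument uses.)
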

\begin{proof}
Let \(B_1,B_2,\ldots,B_{\nextractb}\) be random variables for
the indices of bins that each ball goes into. Each is a uniform independent
over the \(q\) bins.
Let \(X_{1,2},X_{1,3},\ldots,X_{\nextractb-1,\nextractb}\) be
\(\binom{\nextractb}{2}\) random variables for
each pair of indices \(i,j\) with \(i\ne j\), such that \(X_{i,j}\) equals 1
iff \(B_i = B_j\). Each \(X_{i,j}\) is a therefore Bernoulli trial with
\(\E[X_{i,j}] = \tfrac{1}{q}\), and the sum
\(X=\sum_{i\ne j} X_{i,j}\) is the number of pairs of balls which go into the same bin.

We will use a Chernoff bound on the probability that \(X\) is large.
Note that the random variables \(X_{i,j}\) are \emph{not} independent, but
they are negatively correlated: when any \(X_{i,j}\) equals 1, it only
\emph{decreases} the conditional expectation of any other \(X_{i',j'}\).
Therefore, by convexity, we can treat the \(X_{i,j}\)'s as independent
in order to obtain an upper bound on the probability that \(X\) is large.

Observe that \(\E[X] = \binom{\nextractb}{2} / q < \nextractb/8\). A standard consequence
of the Chernoff bound on sums of independent indicator variables tells us that
\(\Pr[X \ge 2\E[X]] \le \exp(-\E[X]/3)\); see for example
\cite[Theorem 4.1]{MRbook},
or \cite[Theorem 1]{harveynotes}.

Substituting the bound on \(\E[X]\) then tells us that
\(\Pr[X \ge \nextractb/4] \le \exp(-\nextractb/24) < \exp(-\statsec)\).
That is, with high probability, fewer than \(\nextractb/4\) pair of balls share the
same bin. If $n_k$ denotes the number of bins with $k$ balls, the number of non-empty bins is:
\[\begin{split}
\sum_{k=1}^q n_k = \left(\nextractb -  \sum_{k=2}^q k n_k\right) + \sum_{k=2}^q n_k &= \nextractb -\sum_{k=2}^q
(k-1)n_k\\
&\geq \nextractb- \sum_{k=2}^q
\binom{k}{2} n_k. %
\end{split}\]
The latter is $> \frac{3}{4} \nextractb$ with high probability, which
completes the proof, since \(3\nextractb/4=\nextractb/2+\nextractb/4=\nextractb/2+n+6\statsec\).
\end{proof}

We now proceed to the main result of the paper.

\begin{restatable}{theorem}{thmaudit}\label{thm:audit}
Let \(\compsec,\statsec,m,n\in\N\), \(\F_q\) a finite field satisfying
\(q\ge 16n + 96\statsec\) be parameters for our PoR scheme.
Then the protocol composed of:
\begin{itemize}[nosep]
\item the \Init{} operations in~\cref{alg:init};
\item the \Read{} operations in~\cref{alg:read};
\item the \Write{} operations in~\cref{alg:write};
\item the \Audit{} operations in~\cref{alg:audit}; and
\item the \Extract{} operation in~\cref{alg:extract} with
  \(\nextractb{=}4n{+}24\statsec\)
\end{itemize}
satisfies correctness, adaptive authenticity and retrievability
as defined in \cref{def:corr,def:privauth,def:retr}.
\end{restatable}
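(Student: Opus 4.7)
The plan is to prove the three properties (correctness, authenticity, retrievability) in order, since each subsequent property builds on lemmas that become natural once the earlier ones are nailed down. Throughout, I would define the invariant $\IsValid(\db,(\clstate,\servstate))$ to mean: (i) the blocks stored by the server match $\db$, (ii) the Merkle root $r_\MM$ (and $r_\WW$ in the externalized case) equals the root of the hash tree over those blocks, and (iii) the (decrypted) control data satisfies $\VV = \UU\MM$ with $\UU$ the Vandermonde-like matrix generated from $\svec$.

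For \emph{correctness}, I would argue by induction that \Init{} establishes the invariant, that \Write{} preserves it (the update rule $\VV'_{\cdot,j} \gets \VV_{\cdot,j} + \UU_{\cdot,i}(\MM'_{ij}-\MM_{ij})$ is exactly the rank-one correction needed to maintain $\VV=\UU\MM$), and \Read{} leaves the invariant untouched and returns $\MM_{ij}$ by correctness of \mtread. For \Audit{}, under the invariant $\UU\yy=\UU\MM\xx=\VV\xx$, so the check always passes. For \Extract{}, with honest transcripts, \cref{lem:chernoff} applied with the $q\ge 16n+96\statsec$ hypothesis gives at least $n$ distinct challenge values $\rho$ except with probability $\exp(-\statsec)=\negl(\statsec)$, so the resulting $\XX$ is a Vandermonde matrix on distinct nonzero points, hence invertible, and $\YY\XX^{-1}=\MM$.

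For \emph{adaptive authenticity}, the deviations split cleanly by algorithm. For any message sent during \Read{} or \Write{}, the standard collision-resistance argument on the Merkle hash tree already shows that a deviation from the honest message is detected except with probability $\negl(\compsec)$; in the externalized case the same argument applies to the second Merkle tree over $\WW$, and IND-CPA ensures the encrypted column updates leak nothing about $\VV$. The only new content is the \Audit{} response: if the adversary returns $\zz\ne\yy$ and the check $\UU\zz=\VV\xx$ still accepts, then $\UU(\zz-\yy)=\vect{0}$, meaning each $s_i$ is a root of the fixed nonzero degree-$(m-1)$ polynomial with coefficients $\zz-\yy$. Since $\svec$ was sampled uniformly from $t$-subsets of $\F_q^\ast$ and the Server's view is independent of $\svec$ (no message ever depends on it), this occurs with probability at most $\binom{m-1}{t}/\binom{q}{t}\le (m/q)^t\le 2^{-\statsec}$ by the choice of $t$ in \cref{ssec:controlvectors}. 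A union bound over the polynomially many adaptive steps preserves negligibility.

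For \emph{retrievability}, I would run the standard reduction: fix an adversary $\adversary$ winning the retrievability game with non-negligible probability $\epsilon$, and consider the $\nextractb=4n+24\statsec$ audit transcripts $\pi_1,\dots,\pi_{\nextract'}$ that were accepted. Let $E$ be the event that some accepted audit uses a $\yy_i\ne\MM\xx_i$; by the authenticity analysis above together with a union bound over the $\nextractb$ audits, $\Pr[E]\le \nextractb\cdot 2^{-\statsec}=\negl(\statsec)$. Conditioned on $\neg E$, every accepted audit is honest, so the set of challenge $\rho$-values is uniform independent in $\F_q^\ast$; \cref{lem:chernoff} then guarantees at least $\nextract'/2+n\ge n$ distinct $\rho$ values among the $\nextract'\ge \nextractb/2$ successful transcripts, except with probability $\exp(-\statsec)$. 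On that event \Extract{} forms an invertible Vandermonde $\XX$ and recovers $\MM=\YY\XX^{-1}$, contradicting the assumption that $\adversary$ wins. The main obstacle, and the one that required the authenticity step first, is exactly this coupling: a priori the accepted transcripts in the retrievability game could be adversarially biased and non-uniform, and it is only the authenticity guarantee that forces them to be (with overwhelming probability) genuine random-$\rho$ transcripts to which \cref{lem:chernoff} applies.
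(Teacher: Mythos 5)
Your proposal follows the same decomposition and the same key lemmas as the paper's proof: correctness reduces to the invariant $\VV=\UU\MM$ and the Merkle tree; authenticity reduces to the roots-of-a-degree-$(m-1)$-polynomial argument with $\Pr\le(m/q)^t\le 2^{-\statsec}$, leveraging that the server's view is independent of $\svec$; retrievability combines authenticity (to guarantee accepted transcripts are honest) with \cref{lem:chernoff} (to guarantee $\ge n$ distinct $\rho$'s survive). Your explicit invariant for correctness and the union bound across the polynomially many adaptive steps are harmless elaborations of what the paper leaves implicit.

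One small imprecision in your retrievability step: you apply \cref{lem:chernoff} to the $\nextract'$ surviving transcripts and claim ``at least $\nextract'/2+n$ distinct values,'' but the lemma is stated for exactly $\nextractb=4n+24\statsec$ balls thrown uniformly, and the survivors are an \emph{adversarially chosen} subset, not a fresh uniform sample. The correct chain (which the paper uses) is: apply the lemma to all $\nextractb$ challenges, obtain $\ge \nextractb/2+n$ distinct $\rho$'s among them with probability $\ge 1-\exp(-\statsec)$; the adversary can force at most $\nextractb - \nextract' \le \nextractb/2$ of these audits to fail, and dropping an audit removes at most one distinct value, so at least $n$ distinct $\rho$'s remain among the accepted transcripts for \emph{any} adversarial choice. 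The conclusion is the same as yours, but the bookkeeping should run through all $\nextractb$ audits and then subtract.
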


\begin{proof}
Correctness comes from the correctness of the Merkle hash tree
algorithms, and from the fact that, when all parties are honest,
\(\UU\yy = \UU\MM\xx = \VV\xx\).

For authenticity, first consider
the secret control block vectors $\UU$ and $\VV$.
On the one hand, in the local storage strategy, $\UU$ and $\VV$ never
travel and all the communications by the Client in all the algorithms
are independent of these secrets.
On the other hand, in the externalization strategy, $\UU$ never travels
and $\VV$ is kept confidential by the IND-CPA symmetric encryption
scheme with key \(K\) known only by the client.
Therefore, from the point of view of the server, it is equivalent,
\emph{in both strategies}, to consider either that these secrets are
computed during initialization as stated,
or that they are only determined \emph{after} the completion of any of
the operations.

Now suppose that the server sends an incorrect audit response \(\zz\ne \MM\xx\)
which the Client fails to reject, and let
\(f\in\F_q[X]\) be the polynomial with degree at most \(m-1\) whose
coefficients are the entries of \((\zz-\MM\xx)\).
Then from \eqref{eq:auth:failure} and \eqref{eq:ufroms} in the prior
discussion, each of
the randomly-chosen values \(s_1,\ldots,s_t\) is a root of this
polynomial \(f\). Because \(f\) has at most \(m-1\) distinct roots, the
chance that a single \(s_i\) is a root of \(f\) is at most \((m-1)/q\),
and therefore the probability that all \(f(s_1)=\cdots=f(s_t)=0\),
is at most \((m/q)^t\).

From the choice of \(t=\lceil\statsec/\log_2(q/m)\rceil\), the chance
that the Client fails to reject an incorrect audit response is at most
\(2^{-\statsec}\), which completes the proof of authenticity
(\cref{def:privauth}).

For retrievability, we need to prove that \cref{alg:extract} succeeds
with high probability on the last step of the security game from
\cref{def:retr}. Because of the authenticity argument above, all successful audit
transcripts are valid with probability \(1-\negl(\statsec)\); that is,
each \(\yy = \MM\xx\) in the input to \cref{alg:extract}.
This \Extract{} algorithm can find an invertible
Vandermonde matrix \(\XX \in \F_q^{n\times{}n}\), and thereby recover
\MM{} successfully, whenever at least \(n\)
of the values \(\rho\) from challenge vectors \(\xx\) are distinct.

Therefore the security game becomes essentially this: The experiment
runs the honest \Audit{} algorithm \(\nextractb = 4n+24\statsec\) times,
each time choosing a value \(\rho\) for the challenge uniformly at
random from \(\F_q\). The adversary must then select \(\nextractb /2 \) of
these audits to succeed, and the adversary wins the game by selecting
$\nextractb/2$ of the \( \nextractb\) random audit challenges which contain
fewer than \(n\) distinct \(\rho\) values.

This is equivalent to the
balls-and-bins game of \cref{lem:chernoff}, which shows that the
\Extract{} algorithm succeeds with probability at least
\(1-\exp(-\statsec) > 1-2^{-\statsec}\) for any selection of \(\nextractb / 2\) out of
\(\nextractb\) random audits.
\end{proof}

\newcommand{\G}{\ensuremath{\mathbb{G}}}
\newcommand{\Q}{\ensuremath{\mathbb{Q}}}
\newcommand{\mleft}{m}
\newcommand{\nright}{n}
\newcommand{\PK}{\matr{K}}
\newcommand{\ww}{\vect{w}}
\subsection{Publicly verifiable variant}\label{ssec:publicverif}
Our scheme can also be adapted to meet the stricter requirement of
\emph{public verifiability} (see \cref{ssec:authenticity}),
wherein there are now two types of client: a \emph{Writer} who
can run any of the \Init{}, \Write{},
\Read{}, or \Audit{} algorithms; and a \emph{Verifier} that can only run
the last two.

{The idea is that $U$ and $V$ will be published as $g^U$ and
  $g^V$ as to hide their values, while still enabling the dot product
  verification.}

{More precisely}, we will employ the externalized storage strategy outlined
in \cref{ssec:controlvectors}, so that the server holds all the
information needed to perform audits. But this alone is not enough, as the
public Verifier (and thus, the possibly-malicious server) must not learn
the plaintext control vector values.

The challenge, then, is to support equality testing for dot products
of encrypted values, without decrypting. Any linearly homomorphic
encryption could be used, but this is actually more than what we need
since decryption is not even necessary. Instead, we will simply employ a group
where the discrete logarithm is hard, instead of the (relatively
small) finite fields used before.

Further, we use a group of prime order, in order to be able to easily
compute with exponents.
In particular{, thanks to the homomorphic property
of exponentiation,}
 we will perform some linear algebra over the group and
need some notations for this.
For a matrix $\AAA$, $g^\AAA$ denotes the coefficient-wise
exponentiation of a generator $g$ to each entry in $\AAA$.
Similarly, for a matrix $\WW$ of group elements and a matrix $\BB$ of
scalars, $\WW^\BB$ denotes the extension of matrix multiplication using
the group action. If we have $\WW=g^\AAA$, then $\WW^\BB =
(g^\AAA)^\BB$.
{Futher, this quantity} can actually be computed {by working}
in the exponents first,
i.e., it is equal to $g^{(\AAA\BB)}$. For example:
\[
  \left(g^{\begin{smatrix}a&b\\c&d\end{smatrix}}\right)^{\begin{smatrix}e\\f\end{smatrix}}
  =
  \begin{smatrix}g^a&g^b\\g^c&g^d\end{smatrix}^{\begin{smatrix}e\\f\end{smatrix}}
  =
  \begin{smatrix}g^{ae+bf}\\g^{ce+df}\end{smatrix}
  =
  g^{\left(\begin{smatrix}a&b\\c&d\end{smatrix}\begin{smatrix}e\\f\end{smatrix}\right)}.
\]
The resulting modified protocol is presented formally in
\cref{proto:abspubverif}. In summary, the changes are as follows:
 \begin{enumerate}[nosep]
 \item Build a group~$\G$ of large prime order~$p$ and generator~$g$.
 \item \Init{}, in \cref{alg:init}, is run identically, except for
   two modifications: first,
   $\WW$ is {mapped to}~$\G$:
   $\WW \leftarrow E(\VV) = g^\VV$;
   second, the Writer also publishes
   an encryption of~$\UU$ as: $\PK\leftarrow g^\UU$ over an
   {\em authenticated} channel;
     $\PK$ is called the \emph{public key}.
 \item All the verifications of the Merkle tree root in
   \cref{alg:read,alg:write,alg:audit} remain unchanged,
   but the Writer must publish the new roots of the trees after each
   \Write{} also over an authenticated and timestamped channel to
   the Verifiers.
 \item Updates to the control vector, in \cref{alg:write} are performed
   homomorphically, without {``deciphering''}~$\WW$:
   the Writer computes in clear,
   $\Delta\leftarrow(\MM'_{ij}-\MM_{ij})\UU_{1..t,i}$, then updates
   $\WW'_{1..t,j}\leftarrow\WW_{1..t,j}\cdot{g^\Delta}$.
 \item The dotproduct verification, in \cref{alg:audit} is performed
   also homomorphically:
   $\PK^\yy\checks{=}\WW^\xx$.
 \end{enumerate}

\begin{remark}\label{rq:fieldsize}
  Note that the costly server-side computation during audits does not
  involve any group operations; only the clients must perform
  exponentiations. However, the field size~$p$ must be increased in
  order for the discrete logarithm to be hard.
  For a database of fixed bit-length, this increase in field size
{induces a cost overhead in the field arithmetic (up to quadratic in $\log p$) which is partly}
  compensated by a corresponding decrease in the matrix dimension {(linear in $\log p$), as $N=mn\log p$.}
\end{remark}

\begin{figure*}[htb]\centering
  \caption{Publicly verifiable Client/server PoR protocol with low storage server}\label{proto:abspubverif}
  \begin{tabular}{|c|ccc|}
    \hline
    & Server & Communications & Client \\
    \hline
    \multirow{4}{*}{\Init}&& $N=m n \log_2 q$ & $s\random{S\subseteq\Z_p}$ \\
    & & $\G$ of order $p$ and gen. $g$ & form \(\uu\gets [\svec^j]_{j=1\ldots{}m} \in \Z_p^{m}\) \\
    & & & $\vv\tp\leftarrow{}\uu\tp\MM$, $\ww\tp\leftarrow{g^\vv}\in\G^n$. \\
    & \multicolumn{3}{c|}{\(\begin{array}{:rcl:}
      \cdashline{1-3}
      & \multirow{2}{*}{\mtinit} & \longleftarrow \compsec,\statsec, b, \MM,\ww \\
      \MM, T_\MM, \ww, T_\ww \longleftarrow & & \longrightarrow r_\MM,
      r_\ww\\
      \cdashline{1-3}
    \end{array}\)} \\
    & Store $\MM,T_\MM,\ww,T_\ww$& & Publish $r_\MM$, $r_\ww$ and $\PK=g^{\uu}$\\
    \hline
    \multirow{2}{*}{\Read}
    & \multicolumn{3}{c|}{\rule{0pt}{8 mm}\(\begin{array}{:rcl:}
      \cdashline{1-3}
      \MM, T_\MM \longrightarrow & \multirow{2}{*}{\mtread} & \longleftarrow i, j, r_\MM \\
      & & \longrightarrow \MM_{ij} \\
      \cdashline{1-3}
    \end{array}\) \quad\quad Return \(\MM_{ij}\)} \\[12pt]
    \hline
    \multirow{3}{*}{\Write}
    & \multicolumn{3}{c|}{\rule{0pt}{8 mm}\(\begin{array}{:rcl:}
      \cdashline{1-3}
      \MM, T_\MM,\ww, T_\ww \longrightarrow & \multirow{2}{*}{\mtread} & \longleftarrow i, j, r_\MM, r_\ww \\
      & & \longrightarrow \MM_{ij},\ww_{j} \\
      \cdashline{1-3}
    \end{array}\)} \\
    & & & $\delta\leftarrow\uu_i(\MM'_{ij}-\MM_{ij})$ \\
    & &$\overset{i,j,\MM'_{ij},\ww'_j}\longleftarrow$  & $\ww_j'\leftarrow{}\ww_j \cdot{} g^\delta$ \\
    & Update $\MM', T_\MM',\ww',T_\ww'$ & & Publish \(r'_\MM, r'_\ww\) \\

    \hline
    \multirow{2}{*}{\Audit} &
    & $\overset{r}\longleftarrow$ &
    $r\random{S\subseteq\Z_p^*}$ \\
& $\yy\gets\MM\xx$ & form \(\xx\gets [r^i]_{i=1\ldots{}n} \in \Z_p^{n}\) & \\
    & \multicolumn{3}{c|}{\rule{0pt}{8 mm}\(\begin{array}{:rcl:}
      \cdashline{1-3}
      \ww, T_\ww \longrightarrow & \multirow{2}{*}{\mtread} & \longleftarrow r_\ww \\
      & & \longrightarrow \ww \\
      \cdashline{1-3}
    \end{array}\) \quad\quad $\PK^y \checks{=} \ww^x$ }\\[12pt]
    \hline
  \end{tabular}
\end{figure*}

Under Linearly Independent Polynomial (LIP)
Security~\cite[Theorem~1]{Abdalla:2015:PRF}\footnote{LIP security
  reduces to the MDDH hypothesis, a generalization of the widely used
  decision linear assumption~\cite{Abdalla:2015:PRF,Morillo:2016:KMDDH}},
the Protocol of~\cref{proto:abspubverif} adds public verifiability to our dynamic
proof of retrievability.
Indeed, LIP security states that in a group
$\G$ of prime order,
the values $(g^{P_1(s)},\ldots,g^{P_m(s)})$ are indistinguishable from a
random tuple of the same size, when $P_1,\ldots,P_m$ are linearly
independent multivariate polynomials of bounded degree and $s$ is
the secret.
Therefore, in our modified protocol, each row
$g^{\UU_i}=\left(g^{\svec_i^j}\right)_{j=1..m}$ is indistinguishable from a random tuple of
size $m$ since  the polynomials $X^j$, $j=1..m$ are independent
distinct monomials.
Then the idea is to reduce breaking the public verifiability to
breaking a discrete logarithm. For this, the discrete logarithm to break
will be put inside $\UU$.

These modifications  give rise to the
following~\cref{thm:LIPpubverif}.
Compared to
\cref{thm:audit}, this requires the LIP security assumptions and a
larger domain of the elements.

\begin{restatable}{theorem}{LIPpubverif}
\label{thm:LIPpubverif}
  Under LIP security in a group $\G$ of
  prime order \(p\ge\max\{16n + 96\statsec,m2^{2\compsec}\}\), where
  discrete logarithms are hard to compute,
  the Protocol of~\cref{proto:abspubverif}
	satisfies correctness, {public} authenticity and retrievability, {as defined in \cref{def:corr,def:pubauth,def:retr}}.
\end{restatable}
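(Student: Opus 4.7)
The plan is to adapt the three-part proof of \cref{thm:audit} (correctness, authenticity, retrievability) to the group-based setting, with the authenticity argument being the fundamentally new piece since the verifier now has access to $\PK=g^\UU$ and $\ww=g^\vv$ rather than no information about $\UU,\VV$ at all.

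\emph{Correctness} is essentially free from the homomorphic property of exponentiation: in an honest execution, $\uu\tp \MM = \vv\tp$ over $\Z_p$, so $\PK^{\MM\xx} = g^{\uu\tp\MM\xx} = g^{\vv\tp\xx} = \ww^\xx$, which is exactly the equality the verifier checks. The Merkle-tree components of \Read{}, \Write{}, and \Audit{} inherit their correctness directly from \cref{ssec:Merkle}, and the \Write{} update $\ww'_j \gets \ww_j\cdot g^\delta$ maintains the invariant $\ww=g^{\uu\tp\MM}$ by linearity. Retrievability for $\nextractb=4n+24\statsec$ transcripts follows essentially verbatim from the proof of \cref{thm:audit}, because (i) the \Extract{} algorithm only needs $n$ distinct $\rho$ values in the Vandermonde $\XX$ to solve $\MM=\YY\XX^{-1}$, and (ii) \cref{lem:chernoff} applies unchanged once authenticity guarantees that each successful transcript satisfies $\yy=\MM\xx$. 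The condition $p\ge 16n+96\statsec$ is the analog of the $q\ge 16n+96\statsec$ hypothesis of \cref{thm:audit}, needed so that the balls-and-bins argument applies.

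The main work is \textbf{public authenticity}, which must hold against an adversary who sees $\PK=g^\UU$ and $\ww=g^\VV$, the Merkle roots, and can interact adaptively. I would split this into the Merkle part and the audit part. The Merkle part is identical to \cref{thm:audit}: any inconsistency in \Read{} or \Write{} requires a hash collision in a $2\compsec$-bit digest, which occurs with probability $\negl(\compsec)$. The audit part is where LIP security enters. Suppose the adversary produces $\zz\ne\MM\xx$ with $\PK^\zz = \ww^\xx$; setting $\Delta=\zz-\MM\xx\in\Z_p^m\setminus\{0\}$, this forgery gives $\PK^\Delta=1$, equivalently $\uu\tp\Delta = \sum_{j=1}^m \svec^j\Delta_j \equiv 0 \pmod p$, so $\svec$ is a root of a nonzero polynomial of degree at most $m$ in $\Z_p[X]$.

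The hard part is arguing that $\svec$ is effectively hidden from the adversary despite publication of $\PK=(g^{\svec^j})_{j=1..m}$. This is exactly what LIP security provides: since the monomials $X,X^2,\ldots,X^m$ are linearly independent polynomials of bounded degree, $\PK$ is computationally indistinguishable from a uniformly random tuple in $\G^m$. I would make this a standard simulation/reduction: given an adversary $\adversary$ winning the public-authenticity game with non-negligible advantage $\epsilon$, build a distinguisher that, on a challenge $\PK^*$ (either $g^\UU$ or uniform), simulates the protocol using $\PK^*$ in place of $g^\UU$ and outputs ``structured'' iff $\adversary$ wins. If $\PK^*$ is uniform, a forged $\Delta$ satisfies $\PK^{*\,\Delta}=1$ with probability at most $1/p$ over the randomness of $\PK^*$ alone (independent of anything $\adversary$ does), so $\adversary$'s winning probability is at most $1/p + \negl$ in that world. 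The gap must then be covered by the LIP distinguishing advantage, giving a reduction to LIP security (and through it to MDDH/discrete log). Summing the forgery probability in the real game: $(m-1)/p$ from the degree-$m$ polynomial rooted at $\svec$ (which the reduction charges when $\PK^*$ is structured), plus the LIP advantage, plus Merkle collision probability. The condition $p \ge m 2^{2\compsec}$ makes the $(m-1)/p$ term at most $2^{-2\compsec}$, hence $\negl(\compsec)$, so the overall adversarial advantage is negligible, establishing \cref{def:pubauth}. The slightly larger domain assumption on $p$ (compared to $q$ in \cref{thm:audit}) is exactly what is needed so that the birthday-style bound against a computationally bounded adversary — who may try up to $2^\compsec$ guesses — remains negligible, whereas the weaker $q\ge 16n+96\statsec$ in the private case sufficed because the secret was never exposed at all.
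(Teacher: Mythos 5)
Your correctness and retrievability arguments match the paper's, and your overall outline (split into three parts, reuse \cref{lem:chernoff}, fall back on Merkle collision resistance) is the right shape. But your authenticity reduction has a genuine gap. You invoke LIP security to swap $\PK=g^\UU$ for a uniformly random tuple $\PK^*\in\G^m$, and then claim that in the uniform world a forger producing $\Delta\ne 0$ with $\PK^{*\,\Delta}=1$ succeeds with probability at most $1/p$ ``over the randomness of $\PK^*$ alone.'' That step does not hold: the adversary sees $\PK^*$ (and $\WW^*$) before choosing $\Delta$, and $\PK^*$ information-theoretically determines its discrete-log vector $u^*$, so the claim ``$\Pr_{\PK^*}[u^*\cdot\Delta=0]\le 1/p$'' only applies to a $\Delta$ fixed in advance, not one chosen adaptively after observing $\PK^*$. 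Producing a nonzero $\Delta$ in the kernel of an unknown $u^*$ given only $g^{u^*}$ is a nontrivial computational problem (essentially the kernel Diffie--Hellman problem the paper cites via~\cite{Morillo:2016:KMDDH}); it is not an information-theoretic $1/p$ event, and you never invoke a computational assumption to bound it. So the reduction, as written, proves nothing.

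The paper closes exactly this hole by taking a different route: rather than ``swap to uniform and argue info-theoretically,'' it \emph{embeds} a DLOG challenge into $\PK$. The simulator picks a random coordinate $k$, sets $\PK=g^\UU\,A^{\vect{e_k}}=g^{\UU+a\vect{e_k}}$ for an unknown $a$ with $A=g^a$, and uses LIP security only to argue that this doctored $\PK$ is indistinguishable from the honest one; it then builds $\WW=\PK^\MM$ and answers all updates homomorphically without ever needing $a$. A forgery $\bar y\ne y$ then yields $(\UU+a\vect{e_k})(\bar y-y)\equiv 0\pmod p$, and whenever the randomly planted coordinate $k$ hits an index where $\bar y_k\ne y_k$ (probability $\ge 1/m$, or $\ge t/m$ in general), the simulator solves for $a$ directly, breaking DLOG with advantage $\ge\epsilon/m$. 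This also explains the role of $p\ge m\,2^{2\compsec}$ differently from your reading: it is not to make $(m-1)/p$ negligible, but to absorb the $1/m$ loss in the reduction while keeping DLOG at $\compsec$ bits of generic security. To repair your proof you would need either to adopt this DLOG-embedding strategy, or to replace your ``$\le 1/p$'' step with an explicit reduction to a kernel-type assumption; the purely statistical argument you carried over from the private-verification proof of \cref{thm:audit} does not survive publishing $g^\UU$.
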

\begin{proof}
In~\cref{proto:abspubverif}, Correctness is just to verify the
dotproducts, but in the exponents; this is:
$\PK^y = g^{\UU{y}} = g^{\UU\MM{x}} = \WW^x$.

Public verifiability is guaranteed as $\PK$ and $\UU$, as well as the
roots $r_\MM$ and $r_\ww$ of the Merkle trees for $\MM$ and $\WW$, are
public.
Now for Authenticity: first, any incorrect $\WW$ is detected by the Merkle hash tree
verification.
Second, with a correct~$\WW$, any incorrect~$y$ is also detected with
high probability, as shown next.
Suppose that there exist an algorithm ${\mathcal{A}(\MM,\PK,\WW,r)}$ that
can defeat the verification with a fake $y$, with probability $\epsilon$.
That is the algorithm produces $\bar{y}$, with
$\bar{y}\neq y=\MM{x}$, such that we have the $t$ equations:
\begin{equation}\label{eq:attack} \PK^y = \WW^x = \PK^{\bar{y}}. \end{equation}

We start with the case $t=1$.
Let $A=g^a$ be a DLOG problem.
Then we follow the proof
of~\cite[Lemma~1]{Elkhiyaoui:2016:ETPV} and simulate \Init{} via the following inputs
to the attacker:
\begin{itemize}[nosep]
\item $r\random{S\subseteq\Z^*_p}$ and let $x=[r,r^2,\ldots,r^{{\nright}}]^\intercal$;
\item Sample $\MM\random{S^{\mleft{\times}\nright}}\subseteq{\Z_p^{\mleft\times\nright}}$ and
  $\UU\random{S^{\mleft}\subseteq\Z_p^{\mleft}}$.
\item Randomly select also $k\in 1..\mleft$ and, then,
compute $\PK=g^\UU A^{\vect{e_k}}$, so that
$\PK=g^{\UU+a\vect{e_k}}$, where $\vect{e_k}$ is the $k$-th canonical
vector of $\Z_p^\mleft$.
\item Under LIP security~\cite[Theorem~3.1]{Abdalla:2015:PRF},
  $\PK$ is indistinguishable from the distribution of the protocol ($g^{s_i^j}$).
\item finally compute $\WW=\PK^\MM$, thus also indistinguishable from
  the distribution of the protocol.
\end{itemize}

To simulate any number of occurences of \Write{}, it is then
sufficient to randomly select $\MM'_{ij}$.
Then compute and send to the attacker:
$\WW'_{1..t,j} \leftarrow \WW_{1..t,j} \cdot{}
K_{1..t,i}^{\MM'_{ij}-\MM_{ij}}$
(since $g^\Delta=g^{(\MM'_{ij}-\MM_{ij})\UU_{1..t,i}}=K_{1..t,i}^{\MM'_{ij}-\MM_{ij}}$).

After that, the attacker answers an \Audit{}, with $\bar{y}\neq{y}$
satisfying~\cref{eq:attack}.
This is $g^{(\UU+a\vect{e_k})\bar{y}}=g^{(\UU+a\vect{e_k})\MM{x}}$, equivalent to:
\begin{equation} (\UU+a\vect{e_k})(\bar{y}-\MM{x})\equiv{0}\mod{p}. \end{equation}

Since $\bar{y}\neq{y}\mod{p}$, then there is at least one index $1\le{j}\le{m}$
such that $\bar{y}_j\neq{y_j}\mod{p}$.
Since $k$ is randomly chosen from $1..m$, the probability that
$\bar{y}_k\neq{y_k}\mod{p}$ is at least $1/m$.
If this is the case then with $z=\bar{y}-y$, we have $z_k\neq{0}\mod{p}$
and $\UU{z}+a z_k \equiv{0}\mod{p}$, so that
$a \equiv-z_k^{-1}\UU{z}\mod{p}$.
This means that the discrete logarithm is broken with advantage
$\ge\epsilon/m$.

Finally for any $t\geq{1}$ the proof is similar except that $A$ is put
in different columns for each of the $t$ rows of $\UU$. Thus the probability to
hit it becomes $\ge{t/m}$ and the advantage is
$\ge{t\epsilon/m}\ge\epsilon/m$. This gives the requirement that
$p\ge{m2^{2\compsec}}$ to sustain the best generic algorithms for DLOG.

Retreivability comes from the fact that $y$ and $x$ are public
values. Therefore this part of the proof is identical to that of
\cref{thm:audit}.
\end{proof}

\begin{remarks}
    We mention a few small performance and implementation notes:
\begin{itemize}[nosep]
\item
  If a Writer wants to perform an audit, she does not need to use the
  encrypted control vector $\PK$, nor to store it.
  She just computes~$\UU{y}$ directly, then checks
  that~$g^{\UU{y}}\checks{=}\WW^x$.
\item Even if~$\UU$ is structured,~$\PK$ hides this structure and
  therefore requires a larger storage.
  But any Verifier can just fetch it and~$r_\WW$ from the
  authenticated channel (for instance, electronically signed), as well
  as fetch~$\WW$ from the Server, and perform the verification on the
  fly. Optimal communications for the Verifier are then when
  $m=n=O(\sqrt{N/\log p})$.
\item To save some constant factors in communications, sending~$\WW$ or
  any of its updates~$\WW'_{i,j}$ is not mandatory anymore: the Server
  can now recompute them directly from~$\MM$, $\PK$ and~$\MM'$.
\end{itemize}
\end{remarks}

{In terms of performance,
the most significant changes between the private and public modes
are for the
Verifier's and (to a much lesser extent) server's computation time
during \Audit{}s:
we show in~\cref{sec:impl} that public verification is
more expensive but this remains doable in a few seconds
even on a constrained device.
}

\section{Experiments with Google cloud services}
\label{sec:impl}
As we have seen, compared to other dynamic PoR schemes,
our protocol aims at achieving the high security guarantees of
PoR, while trading near-minimal persistent server storage for increased
audit computation time.

In order to address the practicality of this tradeoff, we implemented
and tested our PoR protocol
using virtual machines and disks on the
Google Cloud Platform service.

Specifically, we address two primary questions:
\begin{itemize}[nosep]
  \item What is the monetary cost and time required to perform our
    \(O(N)\) time audit on a large database?
  \item How does the decreased cost of persistent storage trade-off with
    increase costs for computation during audits?
\end{itemize}

Our experimental results are summarized in
\cref{table:comm,table:results,table:mpi}.
{%
For a 1TB data file,
the \(O(\sqrt{N})\) communication cost of our audit entails less than
6MB of data transfer, and our implementation executes the \(O(N)\)
audit for this 1TB data file in less than $5$ minutes for a monetary
cost of about $8$ cents USD.
}

By contrast, just the extra persistent storage required by other existing PoR schemes
would cost at least \$40 USD or as much as \$200 USD per month, not
including any computation costs for audits.
These
results indicate that
the
communication and computation costs of our \Audit{} algorithm
are not prohibitive in practice despite their unfavorable
asymptotics; and furthermore,
our solution is the most cost-efficient PoR scheme available
when few audits are performed per day.

We also emphasize again that a key benefit to our PoR scheme is its
\emph{composability} with existing software, as the data file is left
intact as a normal file on the Server's filesystem.

The remainder of this section gives the full details of our
implementation and experimental setup.

\subsection{Parameter selection}
\label{sec:twoprime}

Our algorithm treats the database as if it is an~$n \times m$ matrix with
elements in a finite field. {As seen in~\cref{sec:formal}},
we
need the field size to be at least 40 bits or more in order to ensure
authenticity.

{We ran the experiments with two modes: a private one with a
  $57$-bits prime and a public one with a $253$-bits prime}.

In order to maximize the block size while avoiding costly
multiple-precision arithmetic, we used the largest 57-bit prime,
$p=144115188075855859$ {for the private mode}. This allows the input to be read in 7-byte
(56-bit) chunks with no conversion necessary; each 56-bit chunk of raw
data is treated as an element of~$\F_p$. At the same time,
because~$p$ is less than 64-bit, no computations require
multiple-precision, and multiple multiplications can be accumulated in
128-bit registers before needing to reduce modulo~$p$.
Finally, choosing a prime close to (but less than) a power of 2 makes
randomly sampling integers modulo~$p$ especially efficient
{(discarding sampled values larger than~$p$ will seldom
  happen).}

To balance the bandwidth (protocol communications) and the client
computation costs, we represent \MM{} as a square matrix with
dimensions
$m=n=\sqrt{N/56}$, where the 56 comes from our choice of~$\F_p$.
We also fixed the Merkle tree block size at 8KiB for all experiments
and used SHA-512/224 for the Merkle tree hash algorithm.

For the public mode, we used the following libraries%
\footnote{
\url{https://gmplib.org},
\url{https://github.com/linbox-team/givaro},
\url{http://www.openblas.net},
\url{https://linbox-team.github.io/fflas-ffpack},
\url{https://download.libsodium.org}.
}:
\texttt{gmp-6.2.1} and \texttt{givaro-4.1.1} for arbitrary precision
prime fields, \texttt{openblas-0.3.15} and \texttt{fflas-ffpack-2.4.3}
for high-performance linear algebra,
and \texttt{libsodium-1.0.18} for the elliptic curve.
We ran the experiments with
\emph{ristretto255}, a $253$-bits prime order {subgroup of}
\emph{Curve25519}.
We still use a square matrix database, but now with
$m=n=\sqrt{N/252}$.
Depending on the database size, \cref{thm:LIPpubverif} shows that
the computational security parameter of our next experiments is
set to slightly less than $128$ (namely between $117.78$ and $120.2$).

The resulting asymptotic costs for these parameter choices,
{in both modes,} are
summarized in Table~\ref{tab:sqrt}.

\begin{table}[htbp]\centering\small
\caption{Proof of retrievability via square matrix verifiable
  computing}\label{tab:sqrt}
\begin{tabular}{|c|c|c|c|c|}
\hline
\multicolumn{2}{|c|}{}& Server & Comm. & Client\\
\hline
\multicolumn{2}{|c|}{Storage} & $N+o(N)$ & & $O(\sqrt{N})$ \\
\hline
\multirow{3}{*}{\rotatebox[origin=c]{90}{Comput.}} &\Init{} & $O(N)$ & $N$ & $O(N)$ \\
&\Audit{} & $O(N)$ & $O(\sqrt{N})$ & $O(\sqrt{N})$  \\
&\Read{}/\Write{} & $O(\log(N))$ & $O(\log(N))$& $O(\log(N))$  \\
\hline
\end{tabular}
\end{table}

\subsection{Experimental Design}
Our implementation provides the \Init{}, \Read{}, \Write{}, and \Audit{}
algorithms as described in the previous sections, including the Merkle hash
tree implementation for read/write integrity. As the cost of the first three
of these are comparable to prior work, we focused our experiments
on the \Audit{} algorithm.

We ran two sets of experiments, using virtual machines and disks on
Google Cloud's Compute
Engine\footnote{\url{https://cloud.google.com/compute/docs/machine-types}.}.

\begin{table}[htbp]\centering\small
\caption{Google Cloud Server VMs}\label{table:vms}
{\footnotesize
Costs as of May 2021. Each physical core is hyperthreaded as two vCPUs.
}
\setlength\tabcolsep{4pt} %
\renewcommand{\arraystretch}{1.1} %
\begin{tabular}{|c|c|c|c|c|c|} \hline
\multirow{2}{*}{} & \multirow{2}{*}{Family} & Physical & Main & Local & Cost/hour \\
&& Cores & Memory & SSDs & (USD) \\ \hline
\clientinstance{} & {\footnotesize f1-micro} & 0.1 & 0.6 GB & --- & \$0.01 \\
\serva{} & {\footnotesize n1-standard-2} & 1 & 7.5 GB & 1.5TB & \$0.26 \\
\servb{} & {\footnotesize n1-standard-8} & 4 & 30 GB & 6TB & \$1.04 \\
\servc{} & {\footnotesize n1-standard-32} & 16 & 120 GB & 9TB & \$2.51 \\
\hline
\end{tabular}
\end{table}

The client machine \clientinstance{} was a cheap f1-micro instance
with a shared vCPU
and low RAM, in the europe-west1 region.
For the server, we used 3 different VMs $\serva, \servb, \servc$
as listed in \cref{table:vms}, all in the us-central1 region.
The database file itself was stored on local SSD drives for maximal
throughput in our audit experiments. {Although our implementation does
not use more disk space than the size of the database plus the size of
the Merkle tree (never more than 1.007TB in our experiments), we
over-provisioned the SSDs to achieve higher throughputs; the prices in
\cref{table:vms} reflect the total VM instance and storage costs.}

For testing, we generated
files of size 1GB, 10GB, 100GB, and 1TB
filled with random bytes.
All times reported are total ``wall-clock'' time unless otherwise noted.

Except where noted with an asterisk (*), {where
  experiments were run only once,} all values are the median over
11 runs, ignoring the first run in order to ``warm up'' caches etc.
Note that this actually had a significant effect on the larger machine size
which also has more RAM, as the 16-core machine can cache all sizes except the
1TB database in memory.

\subsection{Audit compared to checksums}

For the first set of experiments, we
wanted to address the question of how ``heavy'' the hidden
constant in the~$O(N)$~is.
For this, we compared the cost of performing
a single audit, on databases of various sizes, to the cost of computing
a cryptographic checksum of the entire database using the standard
Linux checksum tools \texttt{md5sum} and \texttt{sha256sum}.

\begin{table}[htbp]\centering \small
\caption{Single-threaded experiments on Google
  Cloud}\label{table:results}
{\footnotesize
Values indicate the median number of seconds for a single run
on the \serva{} machine.
Except where noted with (*), each experiment was performed 11 times.
In all cases, after discarding at most one outlier value, the maximum relative
difference between the runs was at most 20\%.}\\
\renewcommand{\arraystretch}{1.1} %
\begin{center}
\begin{tabular}{|c||c|c|c|c|} \hline
\multicolumn{1}{|c||}{Operation} &      1GB     &       10GB    &
100GB   &       1TB\\   \hline
\multirow{1}{*}{MD5}
& 1.87 & 20.58 & 202.51 & 2017.76* \\
\multirow{1}{*}{SHA256}
& 5.21 & 54.52 & 561.22 & 5413.35* \\
\multirow{1}{*}{\Init{}}
& 2.46 & 29.42 & 284.75* & 2772.14* \\
\hline \multicolumn{5}{c}{PRIVATE-VERIFIED AUDIT USING 57-BIT PRIME} \\
\hline
\multirow{1}{*}{Client}
& 0.00 & 0.00 & 0.00 & 0.01 \\
\multirow{1}{*}{Server}
& 0.24 & 4.93 & 53.05 & 529.90 \\
\hline \multicolumn{5}{c}{PUBLIC-VERIFIED AUDIT USING RISTRETTO255} \\
\hline
\multirow{1}{*}{Client}
& 0.53 & 1.67 & 5.37 & 16.81 \\
\multirow{1}{*}{Server}
& 1.65 & 17.1 & 173.49 & 1725.75* \\
\hline
\end{tabular}
\end{center}
\end{table}

In a sense, a cryptographic checksum is another means of integrity check
that requires no extra storage, albeit without the malicious server
protection that our PoR protocol provides. Therefore, having an audit
cost which is comparable to that of a cryptographic checksum indicates
the~$O(N)$ theoretical cost is not too heavy in practice.

\Cref{table:results} confirms that the cost of our \Audit{} procedure
scales linearly with the database size, as expected.
Furthermore, we can see that audits are very efficient in practice,
being even faster than the built-in checksum utilities in our tests.
That is, the $O(N)$ running time of our \Audit{} algorithm is actually
feasible, for both the private and public mode, even at the terabyte
scale.
The public mode is slightly slower, as expected in~\cref{rq:fieldsize}.

\subsection{Parallel server speedup for audits}
Our experimental results in \cref{table:results} indicate good
performance for our \Audit{} algorithm, but at the larger end of
database sizes such as 1TB, the $O(N)$ work performed by the server
still incurs a significant delay of several minutes.
To demonstrate that a more powerful server can handle large-size
\Audit{}s even more efficiently,
we used OpenMP to parallelize the main loop of
our \Audit{} algorithms.
These routines are trivially
parallelizable{:} each parallel
  core
performs the matrix-vector product on a contiguous
subset of rows of~\MM{}, corresponding to a contiguous segment of the underlying file.

Because the built-in MD5 and SHA256 checksum programs do not achieve any
parallel speedup, we focused only on our \Audit{} algorithm for this set
of experiments.
The results are reported in \cref{table:mpi}.

When the computation is CPU-bound, as is the case mostly with the public
verified version that uses larger primes, CPU utilization is high
and we achieve linear
speedup compared to the single-core timings in \cref{table:results}.
For the more efficient 57-bit private verification version, the
speedup compared to \cref{table:results} is sometimes more and sometimes
less than linear, for two reasons that have to do with the I/O bottleneck
between disk and CPU.

First, the larger machines \servb{} and \servc{} that are used here do
not just have more cores than \serva{}; they also have more RAM and more
(over-provisioned) local SSD space. This allows \servb{} to
entirely cache the 10GB database and \servc{} to entirely cache
the 10GB and 100GB databases, leading to sometimes super-linear speedup
when the computation is I/O-bound.

The second observation is that, even using the fastest solution
available (local SSDs) in Google Cloud, we could not achieve
greater than roughly 4GB/sec throughput reading from disk.
This effectively creates a ``maximum speed'' for any computation,
which limits the benefit of additional cores especially for the 1TB
audit with the small 57-bit prime.

To a lesser extent these two phenomena also occur in the public mode.
There, they are however partially compensated by a better parallelism
pertaining an increase in the computations.

However, we emphasize again that this is a \emph{good thing} --- our
\Audit{} algorithm is efficiently parallelizable, up to the inherent
limiting speed of fetching data from the underlying storage.

We also used these times to measure the total cost of running each audit in Google Cloud Platform,
which features per-second billing of VMs and persistent disks, as reported in \cref{table:mpi}
as well.
{Interestingly, due to the disk throughput limitations
discussed above, the 4-core VM is more cost-effective for
private-verified audits.}

\begin{table}[htbp]\centering\small
\caption{Multi-core server times for Audit}\label{table:mpi}
{\footnotesize
Values indicate the median number of seconds wall-time for a single run.
Except where noted with (*), each experiment was performed 11 times.
In all cases, after discarding at most one outlier, the maximum relative
difference between the runs was at most 20\%.}\\
\setlength\tabcolsep{5pt} %
\renewcommand{\arraystretch}{1.1} %
\begin{tabular}{|c|c||c|c|c|c|} \hline
Server & Metric &      1GB     &       10GB    &       100GB   &
1TB\\   \hline
\multicolumn{6}{c}{PRIVATE-VERIFIED AUDIT USING 57-BIT PRIME} \\
\hline
\multirow{2}{*}{\servb{}}
& Audit & 0.06 & 0.62 & 29.08 & 278.37 \\
\cline{2-6}
& Cost & \$0.00002 & \$0.0002 & \$0.008 & \$0.08 \\
\hline
\hline
\multirow{2}{*}{\servc{}}
& Audit & 0.03 & 0.22 & 1.88 & 250.91 \\
\cline{2-6}
& Cost & \$0.00002 & \$0.0002 & \$0.001 & \$0.175 \\
\hline
\multicolumn{6}{c}{PUBLIC-VERIFIED AUDIT USING RISTRETTO255} \\
\hline
\multirow{2}{*}{\servb{}}
& Audit & 0.45 & 4.37 & 51.45 & 536.09* \\
\cline{2-6}
& Cost & \$0.0001 & \$0.001 & \$0.015 & \$0.155 \\
\hline
\hline
\multirow{2}{*}{\servc{}}
& Audit & 0.12 & 1.21 & 11.87 & 357.49* \\
\cline{2-6}
& Cost & \$0.0001 & \$0.001 & \$0.008 & \$0.249 \\
\hline
\end{tabular}
\end{table}

\subsection{Network communication costs}
Having closely examined the server and client computation times,
we finally turn to the \(O(\sqrt{N})\) communication bandwidth between
client and server during audits.
Recall that our client \clientinstance{} was located in western Europe and the servers
\serva, \servb, \servc{} were located in central North America.
As a baseline, we used ping and scp to determine the
client-server network connection: it had an average round-trip latency of
101ms and achieved throughput as high as 19.1 MB/sec.

The time spent communicating the challenge and response vectors, \xx{} and
\yy{}, becomes insignificant in comparison to the server computation as the size of the database increases.
In the case of our experiments, \cref{table:comm} summarizes that
communication time of both \xx{} and \yy{} remains under
{two} seconds.
  We also list the total amount of data communicated, which exhibits
square root scaling as expected.

\begin{table}[htbp]\centering\small
\caption{Amount of Communication Per \Audit}\label{table:comm}
{\footnotesize
Values indicate the median number of seconds for a single run with the
\servb{} server.
Each experiment was performed 11 times, with a maximum variance of 13\%
between runs.}\\
	\begin{tabular}{|c||c|c|c|c|} \hline
		Metric & 1GB & 10GB & 100GB & 1TB \\ \hline \hline
		Comm. (KB) & 187 & 591 & 1868 & 5906 \\ \hline
		Time (s) & 0.73 & 1.19 & 1.50 & 1.80 \\ \hline
	\end{tabular}
\end{table}

\section{Detailed state of the art}
\label{sec:sota}
PDP schemes,  first introduced by {Ateniese et al.} \cite{ateniese2007provable},
originally only considered static data storage.
The original scheme was later adapted to
allow dynamic updates by {Erway et al.} \cite{Erway} and has since seen numerous
performance improvements. However, PDPs only guarantee
(probabilistically) that a
\emph{large fraction} of the data was not altered; a single block
deletion or alteration is likely to go undetected in an audit.

PoR schemes, independently introduced by
{Juels et al.} \cite{juels2007pors}, provide a stronger guarantee of integrity: namely, that
any small alteration to the data is likely to be detected. In this
paper, we use the term PoR to refer to any scheme which provides this
stronger level of recoverability guarantee.

PoR and PDP are usually constructed as a collection of phases in order
to initialize the data storage, to access it afterwards and to audit
the server's storage. Dynamic schemes also propose a modification of
subsets of data, called write or update.
Since 2007, different schemes have been proposed  to serve different
purposes such as data confidentiality,
data integrity, or data availability, but also  freshness and fairness.
Storage efficiency, communication efficiency and reduction of disk I/O
have improved with time.
Some schemes  are developed for static data (no update algorithm) ,
others extend their audit algorithm  for public verification, still
others require a finite number of Audits and Updates.

\subsection{Low storage overhead}\label{ssec:storj}
 The schemes of Ateniese et al. \cite{ateniese2007provable} or Seb\'e
et al. \cite{Sebe:2008:EfficientRD} are in the PDP model.
Both of them have a storage overhead in $o(N)$. They use the
RSA protocol in order to construct homomorphic authenticators, so that a
successful
audit guaranties data possession on some selected blocks.  When
all the blocks are selected, the audit is deterministic but the
computation cost is high. So in practice, \cite{ateniese2007provable}
minimizes the file block accesses, the computation on the server, and
the client-server communication.  For one audit on at most $f$
blocks,the S-PDP protocol of \cite{ateniese2007provable} gives the
costs seen in \cref{tab:PSPDP}.  A robust auditing integrates S-PDP with a
forward error-correcting codes to mitigate arbitrary small file
corruption. Nevertheless, if the server passes one audit, it guarantees
only that a portion of the data is correct.

\begin{table}[htbp]\centering\small
\caption{ S-PDP on $f$ blocks :
The file $M$ is composed of $N/\mtblock$ blocks of bit-size
$\mtblock$.}\label{tab:PSPDP}
{\footnotesize The computation is
made mod Q, a product of two large prime numbers.}
\renewcommand{\arraystretch}{1.1}
\begin{tabular}{|c|c|c|cc|}
\hline
\multicolumn{2}{|c|}{}& Server & Communication & Client\\
\hline
\multicolumn{2}{|c|}{Storage} & $N+ m $ & & $ O(1)$ \\
\hline
\multirow{3}{*}{\rotatebox[origin=c]{90}{Comput.}} &Setup &  & $  N+f $ & $O(\mtblock f)$ \\
&Audit & $O(f)$ & $ O(1)$ & $O(f)$ \\
 &&&&\\
\hline
\end{tabular}
\end{table}

Later,  Ateniese et al. \cite{ateniese2008scalable} proposed a
scheme secure under the
random oracle model based on hash functions and symmetric keys. It
has an efficient update algorithm but uses tokens which impose a
limited number of audits or updates.

Alternatively, verifiable computing can be used to go through the whole
database with Merkle hash trees, as in~\cite[\S
6]{Benabbas:2011:vdcld}. The latter proposition however comes with a
large overhead in homomorphic computations and does not provide an
Audit mechanism. Verifiable computing can provide an audit mechanism,
as sketched by {Fiore and Gennaro} in~\cite{Fiore:2012:PVD}, but then it
is not dynamic anymore.

Storj
\cite{Storj:2016:por} (version 2) is a very different approach also based
on Merkle hash trees. It is a dynamic PoR protocol with bounded Audits.
The storage is encrypted and cut into $m$ blocks of size $\mtblock$.
For each block and for a selection of $\sigma$ salts, a Merkle Hash tree
with $\sigma$ leaves
is constructed.
The efficiency  of Storj is presented Table~\ref{tab:Storj}.
Storj allows only a fixed number of audits (the number of seeds)
before the entire data must be re-downloaded to restart the
computation. This is a cost of $O(N\sigma)$ operations for the client
every $\sigma$ audits, and thus an average cost of $O(N)$. Our PoR
supports unlimited and fast audits, of cost always $O(\log{n})$.

\begin{table}[htbp]\centering\small
\caption{ Storj-V2:
The file $M$ is composed of $N/\mtblock$ blocks of bit-size~$\mtblock$. $\sigma$ is the number of salts.}\label{tab:Storj}
\setlength{\tabcolsep}{3pt}
\renewcommand{\arraystretch}{1.1}
\begin{tabular}{|c|c|c|c|c|}
\hline
\multicolumn{2}{|c|}{}& Server & Comm. & Client\\
\hline
\multicolumn{2}{|c|}{Storage} & $N{+}O( \frac{N}{\mtblock}\sigma)$ & & $O(\frac{N}{\mtblock} \sigma)$\\
\hline
\multirow{3}{*}{\rotatebox[origin=c]{90}{Comput.}}
& Setup &  & $ N{+}O(\frac{N}{\mtblock} \sigma)$ & $O(N\sigma)$ \\
& Avg. Audit & $O(N+\frac{N}{\mtblock} \sigma)$ &   $O(\frac{N}{\mtblock}\log \sigma+ \frac{N}{\sigma})$  & $O(N)$ \\
& Update &  & $ \mtblock {+} O(\sigma)$ & $O(\mtblock\sigma)$  \\
\hline
\end{tabular}
\end{table}

\subsection{Fast audits but large extra storage}
PoR methods based on block erasure encoding are a class of methods
which guarantee with a high probability that  the client's entire data
can be retrieved.  %
The idea is to check the authenticity of a number of erasure encoding
blocks during the data recovery step but also during the audit
algorithm. Those approaches will not detect a small amount of corrupted
data. But the idea is that if there are very few corrupted blocks,
they could be easily recovered via the error correcting code.

Lavauzelle et al., \cite{Lavauzelle:2016:ldcpor} proposed a static
PoR.
The \Init{} algorithm consists in  %
encoding the file %
using a lifted  q-ary Reed-Solomon code %
and  encrypting it with a block-cipher.
The Audit  algorithm checks if one word of $q$ blocks belongs to a set
of Reed-Solomon code words. This algorithm has to succeed a sufficient
number of times to ensure with a high probability that the file can be
recovered. Its main drawback is that it requires an initialization
quadratic in the database size. For a large data file of several
terabytes this becomes intractable.

In addition to a block erasure code, PoRSYS of Juels et
al. \cite{juels2007pors} use block encryptions and  sentinels in
order to store static data with a cloud server. %
Shacham and Waters \cite{shacham2008compact} use authenticators to
improve the audit algorithm. A publicly verifiable scheme based on the
Diffie-Hellman problem in bilinear groups is also proposed.

Stefanov et al. \cite{stefanov2012iris} were the first to consider a
dynamic PoR scheme.
Later improvements by Cash et al. or Shi et
al. \cite{Cash:2017:DPR,Shi:2013:orampor} allow for dynamic updates
and reduce the asymptotic complexity (see Table~\ref{tab:Shi}). However, these techniques rely on
computationally-intensive tools, such as locally decodable codes and
Oblivious RAM (ORAM), and incur at least a 1.5x, or as much as 10x,
overhead on the size of remote storage.

\begin{table}[htbp]\centering\small
\caption{ Shi et al.~\cite{Shi:2013:orampor}:
The file $M$ is composed of $\frac{N}{\mtblock}$ blocks of bit-size~$\mtblock$.}\label{tab:Shi}
\renewcommand{\arraystretch}{1.1}
\begin{tabular}{|c|c|c|c|c|}
\hline
\multicolumn{2}{|c|}{}& Server & Communication & Client\\
\hline
\multicolumn{2}{|c|}{Storage} & $O( N)$ & & $ O(\mtblock)$\\
\hline
\multirow{3}{*}{\rotatebox[origin=c]{90}{Comput.}} &Setup &  &$
 N+O(\frac{N}{\mtblock})$  & $O(N \log N)$  \\
& Audit & $O(\mtblock  \log N)$ & $%
  O(\mtblock + \log N)$& $ O(\mtblock + \log N)$ \\
&Update & $O(\mtblock \log N)$ & $%
 O( \mtblock+ \log N)$ & $O(b+ \log N)$  \\

\hline
\end{tabular}
\end{table}

Recent variants include {\em Proof of Data Replication} or {\em Proof
  of Data Reliability}, where the error correction is performed by the
server instead of the
client~\cite{Armknecht:2016:mirror,Vasilopoulos:2019:portos}.
Some use a weaker, {\em rational}, attacker
model~\cite{Moran:2019:PoST,Cecchetti:2019:PIE},
and in all of them the client thus has to also be able to
verify the redundancy; but we do not know of dynamic versions of these.

\begin{table}[htbp]\centering\small
\caption{Comparison of our low server storage protocol with that of Shi et al.~\cite{Shi:2013:orampor}.}\label{tab:protocol}
\begin{tabular}{p{2.475cm}p{1.6cm}p{1.35cm}p{1.35cm}}
\toprule
&Shi  & Here& Here \\
&et al.~\cite{Shi:2013:orampor}& \ExternT & \ExternF \\
\midrule
Server extra-storage & $5N$                 & $o(N)$         & $o(N)$ \\
Server audit cost    & $O(\mtblock\log N)$ & $N{+}o(N)$       & $N{+}o(N)$ \\
Communication        & $O(\mtblock{+}\log N)$ & $O(\sqrt{N})$ & $O(N^\alpha)$    \\
Client audit cost    & $O(\mtblock{+}\log N)$ & $O( \sqrt{N})$ & $O(N^{1{-}\alpha})$ \\
Client storage       &  $O(\mtblock)$  & $O(1)$ & $O(N^{1{-}\alpha})$ \\
\bottomrule
\end{tabular}
\end{table}
Table~\ref{tab:protocol} compares the additional server storage and audit costs
between~\cite{Shi:2013:orampor} and the two variants of our protocol:
the first one saving on
communication, and the
second one, externalizing the storage of the secret audit matrix $V$.
In the former case, an arbitrary parameter $\alpha$ can be used
in the choice of the dimensions: $m=N^\alpha$ and $n=N^{1-\alpha}/\log_2(q)$.
This balances between the
communication cost $O(N^\alpha$)
and the Client computation and storage $O(N^{1-\alpha})$.

Note that efficient solutions to PoR  for dynamic data do not
consider the confidentiality of the file $M$, but assume that the user
can encrypt its data in a prior step if needed.

\section{Conclusion}
We presented new protocols for dynamic Proof of Retrievability, based
on randomized linear algebra verification schemes over a finite
field.
Our protocols do not require any encoding of the database
and are therefore near optimal in terms of persistent storage on the server
side. They include also efficient unlimited partial {retrievals} and
updates as well as provable {retrievability} from malicious servers.
They are implementable with simple cryptographic building blocks and are
very efficient in practice as shown for instance on a Google Compute
platform instance.
With the addition of any IND-CPA symmetric cipher the clients become
nearly stateless; adding a group where the discrete logarithm is hard
also enables a public verification.

On the one hand, private proofs are very fast, less than a second on
constrained devices. On the other hand, while still quite cheap, the
public verification could nonetheless be improved. Precomputations of
multiples of elements of $\PK$ and $\UU$, combined with dedicated
methods for dotproduct in the exponents
(generalizing of Shamir's trick for simultaneous exponentiations)
might improve the running time.
More generally, our verification is a dotproduct, or a polynomial
evaluation when the control vectors are structured. This verification
itself could be instead computed on the {server} side and only verified
by a client, using for instance succinct
non-interactive arguments of knowledge.
\section*{Availability}

The source code and script to perform the experiments of~\cref{sec:impl} are
available via the following {GitHub} repository:
\url{https://github.com/dsroche/la-por}.

\providecommand{\noopsort}[1]{}

\end{document}